\documentclass[a4paper,10pt]{eptcs}
\usepackage[utf8]{inputenc}
\usepackage[english]{babel}
\usepackage{amsmath,amsthm,amsfonts,amssymb,stmaryrd}
\renewcommand{\marginpar}[2][]{%
}
\usepackage[all]{xy}
\CompileMatrices
\SelectTips{cm}{12} 
\newdir{ >}{{}*!/-5pt/@{>}}
\usepackage{tikz}
\usetikzlibrary{backgrounds}
\usepackage{ifthen}

\newcommand{\mnode}[3][]{%
  \ifthenelse{\equal{#1}{}}%
  {\node[inner sep = 2pt] (#2) at (#3) {${#2}$};}%
  {\node[inner sep = 2pt] (#1) at (#3) {${#2}$};}%
}

\usepgflibrary{arrows}
\usetikzlibrary{topaths}
\newcommand{\ardraw}[3][->]{
  \ardrawL[#1]{#2}{#3}{ }{ };
}

\newcommand{\ardrawL}[5][]{%
    \ifthenelse{\equal{#4}{}}
    {%
      \ifthenelse{\equal{#1}{}}
      {%
        \begin{scope}[inner sep=1.5pt,fill=white]
          \draw[->] (#2) --node[#4,fill]  {$\scriptstyle#5$} (#3);
        \end{scope}%
      }%
      {%
        \begin{scope}[inner sep=1.5pt,fill=white]
          \draw[->]  (#2) #1   node[#4,fill]  {$\scriptstyle#5$} (#3);
        \end{scope}%
      }%
    }
    {%
      \ifthenelse{\equal{#1}{}}
      {%
        \begin{scope}[inner sep=1.5pt]
          \draw[->] (#2) --node[#4]  {$\scriptstyle#5$} (#3);
        \end{scope}
      }%
      {%
        \begin{scope}[inner sep=1.5pt]
          \draw[->] (#2)  #1  node[#4]  {$\scriptstyle#5$} (#3);
        \end{scope}
      }%
    }
}

\newcommand{\mardrawL}[5][--]{{%
    \ifthenelse{\equal{#4}{}}
    {\begin{scope}[inner sep=3pt,fill=white,fill]
      \draw[angle 90 reversed->] (#2) #1 node[#4,fill=white]  {$\scriptstyle#5$} (#3);
    \end{scope}}
    {\begin{scope}[inner sep=3pt]
      \draw[angle 90 reversed->] (#2) #1 node[#4]  {$\scriptstyle#5$} (#3);
    \end{scope}}
}}



\newcommand{\rardrawL}[5][--]{{%
    \ifthenelse{\equal{#4}{}}
    {\begin{scope}[inner sep=3pt,fill=white,fill]
      \draw[left hook->] (#2) #1 node[#4,fill=white]  {$\scriptstyle#5$} (#3);
    \end{scope}}
    {\begin{scope}[inner sep=3pt]
      \draw[left hook->] (#2) #1 node[#4]  {$\scriptstyle#5$} (#3);
    \end{scope}}
}}


\newcommand{\RardrawL}[5][--]{{%
    \ifthenelse{\equal{#4}{}}
    {\begin{scope}[inner sep=3pt,fill=white,fill]
      \draw[right hook->] (#2) #1 node[#4,fill=white]  {$\scriptstyle#5$} (#3);
    \end{scope}}
    {\begin{scope}[inner sep=3pt]
      \draw[right hook->] (#2) #1 node[#4]  {$\scriptstyle#5$} (#3);
    \end{scope}}
}}

\newcommand{\ardrawd}[3][--]{%
  {\tikzstyle{every node}=[inner sep=3pt]
    \draw[-,thick,white,double] (#2) #1  (#3);
    \draw[->] (#2) #1  (#3);}}

\newcommand{\pardrawL}[5][--]{{%
    \ifthenelse{\equal{#4}{}}
    {\begin{scope}[inner sep=3pt,fill=white,fill]
    \draw[>=left to,->] (#2) #1 node[#4,fill=white]  {$\scriptstyle#5$} (#3);
    \end{scope}}
    {\begin{scope}[inner sep=3pt]
    \draw[>=left to,->] (#2) #1 node[#4]  {$\scriptstyle#5$} (#3);
    \end{scope}}
}}

\newcommand{\POCS}[4][.09]{
  \begin{scope}[scale=#1]
  \draw (#2)++([scale=1]#4)++([scale=1]#3) 
  ++(#3) ++(#3) -- ++ (#4) -- ++ (#4) ;
  \draw (#2)++([scale=1]#3) ++([scale=1]#4) 
  ++(#4) ++(#4) -- ++ (#3)-- ++ (#3) ;
  \end{scope}
}

\newcommand{\POC}[3][.11]{%
  \POCS[#1]{#2}{#3-45:1}{#3+45:1}
}

\newcommand{\arthree}[2]{%
  \arthree{#1}{#2}{}{}%
}

\newcommand{\eardrawL}[5][--]{{%
    \ifthenelse{\equal{#4}{}}
    {\begin{scope}[inner sep=2pt,fill=white,fill]
      \draw[->>] (#2) #1 node[#4,fill=white]  {$\scriptstyle#5$} (#3);
    \end{scope}}
    {\begin{scope}[inner sep=2pt]
      \draw[->>] (#2) #1 node[#4]  {$\scriptstyle#5$} (#3);
    \end{scope}}
}}


\newcommand{\setN}{\mathbb{N}}

\DeclareMathOperator{\A}{\mathrm{ar}}
\DeclareMathOperator{\cnct}{\mathrm{cnct}}




\renewcommand{\bar}{\overline}

\newcommand{\R}{\mathcal{R}}
\renewcommand{\S}{\mathcal{S}}

\newcommand{\redlabel}[3]{#1 \rightarrow #2 \leftarrow #3}
\newcommand{\context}[3]{#1 \rightarrow #2 \leftarrow #3}
\newcommand{\redrule}[3]{#1 \leftarrow #2 \rightarrow #3}
\newcommand{\fullaction}[5]{(#2 \rightarrow #1) \xrightarrow{#2 \rightarrow #3 \leftarrow #4} (#4 \rightarrow #5)}
\newcommand{\action}[5]{(#2 \rightarrow #1) \xrightarrow{#2 \rightarrow #3 \leftarrow #4} (#4 \rightarrow #5)}
\newcommand{\actionLight}[5]{#2 \rightarrow #1 \xrightarrow{#2 \rightarrow #3 \leftarrow #4} #4 \rightarrow #5}
\newcommand{\theaction}{\action{G}{J}{F}{K}{H}}
\newcommand{\theactionLight}{\actionLight{G}{J}{F}{K}{H}}
\renewcommand{\ar}{\mathtt{ar}}

\newcommand{\pair}[2]{#1 \!\!\!\Join\!\!\! #2}
\newcommand{\compl}[2]{\widehat #1^{#2}}     
\newcommand{\minint}[2]{J^{^{_ #2}}_{^#1}} 

\let\oldSlashL\L
\renewcommand{\L}{\ensuremath{\text{\oldSlashL}}}

\usepackage{comment}
\usepackage[center]{subfigure}
\usepackage[small,bf,center]{caption}


\theoremstyle{plain}
\newtheorem{theorem}{Theorem}[section]
\newtheorem{lemma}[theorem]{Lemma}

\newtheorem{proposition}[theorem]{Proposition} 

\newtheorem{fact}[theorem]{Fact}
\theoremstyle{remark}
\newtheorem{remark}{Remark}[section]

\theoremstyle{definition}
\newtheorem{definition}[theorem]{Definition}
\newtheorem{example}{Example}[section]

\newcommand{\centerparagraph}[1]{\textsc{#1}}

\usepackage{ifthen}
\let\oldTo\to
\let\oldGets\gets
\renewcommand{\to}[1][ ]{%
  \ifthenelse{\equal{#1}{ }}
  {\oldTo}
  {\ensuremath{\mathrel{{\relbar}\mkern-1mu\raisebox{.3ex}{\ensuremath{\scriptstyle #1}}\mkern-1mu{\shortrightarrow}}}}
}
\renewcommand{\gets}[1][ ]{%
  \ifthenelse{\equal{#1}{ }}
  {\oldGets}
  {\ensuremath{\mathrel{{\shortleftarrow}\mkern-1mu\raisebox{.3ex}{\ensuremath{\scriptstyle #1}}\mkern-1mu{\relbar}}}}
}

\title{Structured Operational Semantics for Graph Rewriting\thanks{%
This work was partially supported by grants from Agence Nationale de la Recherche, ref. ANR-08-BLANC-0211-01 (COMPLICE project) and ref. ANR-09-BLAN-0169 (PANDA project).
}}

\author{Andrei Dorman
\institute{Dip. di Filosofia, Università Roma Tre \\ LIPN -- UMR\,7030, Universit\'e Paris 13}
\email{andrei.dorman@lipn.univ-paris13.fr} \and
Tobias Heindel 
\institute{LIPN -- UMR\,7030, Universit\'e Paris 13 }
\email{tobias.heindel@lipn.univ-paris13.fr}}


\usepackage{microtype}
\begin{document}
\maketitle
\begin{abstract}
  Process calculi and graph transformation systems provide models of
  reactive systems with labelled transition semantics.  While the
  semantics for process calculi is compositional, this is not the case
  for graph transformation systems, in general.  Hence, the goal of
  this article is to obtain a compositional semantics for graph
  transformation system in analogy to the structural operational
  semantics (SOS) for Milner's Calculus of Communicating Systems (CCS).

  The paper introduces an SOS style axiomatization of the standard
  labelled transition semantics for graph transformation systems.  The
  first result is its equivalence with the so-called Borrowed Context
  technique.  Unfortunately, the axiomatization is not compositional
  in the expected manner as no rule captures ``internal''
  communication of sub-systems.  The main result states that such a
  rule is derivable if the given graph transformation system enjoys a
  certain property, which we call ``complementarity of actions''.
  Archetypal examples of such systems are interaction nets.  We also
  discuss problems that arise if ``complementarity of actions'' is
  violated.
\end{abstract}
\emph{Key words:} process calculi, graph transformation, structural operational semantics, 
compositional methods

\section{Introduction}
\label{sec:introduction}
Process calculi remain one of the central tools for the description of 
interactive systems. 
The archetypal example of  process calculi 
are Milner's $\pi$-calculus and the even more basic calculus of communication systems~\textsc{(ccs)}.
The semantics of these calculi is given by 
labelled transition systems (\textsc{lts}), 
which in fact can be given as a structural operational semantics (\textsc{sos}). 
An advantage of \textsc{sos} is their potential for combination with compositional methods for
the verification of systems (see e.g.~\cite{Simpson2004287}).

Fruitful inspiration for the development of \textsc{lts} semantics for other
``non-standard'' process calculi 
originates from the area of graph transformation
where techniques for the derivation of \textsc{lts} semantics from ``reaction rules'' 
have been developed~\cite{sassonesobocinski:njc,EH06}. 
The strongest point of these techniques is the context independence of the 
resulting behavioral equivalences, 
which are in fact congruences. 
Moreover, 
these techniques have lead to original~\textsc{lts}-semantics
for the ambient calculus~\cite{Rathke2010,bonchi2009labelled}, 
which are also given as \textsc{sos} systems. 
Already in the special case of ambients, 
the \textsc{sos}-style presentation goes beyond 
the standard techniques of label derivation in~\cite{sassonesobocinski:njc,EH06}. 
An open research challenge is the development of a general technique for the canonical derivation 
of \textsc{sos}-style~\textsc{lts}-semantics. 
The problem is the ``monolithic'' character of the standard \textsc{lts} for graph transformation systems.

In the present paper, we set out to develop a partial solution to the problem 
for what we shall call \emph{\textsc{ccs}-like} graph transformation systems. 
The main idea is to develop an analogy to \textsc{ccs} 
where each action $\alpha$ has a co-action $\bar \alpha$
that can synchronize to obtain a silent transition; 
this is the so-called \emph{communication rule}. 
In analogy, 
one can restrict attention to graph transformation systems 
with rules that 
allow to assign to each (hyper-)edge a unique \emph{co-edge}. 
Natural examples of such systems are interaction nets as introduced by Lafont\marginpar{deterministic?}~\cite{Laf90,Alex99}.
In fact, 
one of the motivations of the paper is to derive
\textsc{sos} semantics for interaction nets. 

\paragraph{Structure and contents of the paper}
We first introduce the very essentials of graph transformation 
and the so-called  Borrowed Context \textsc{(bc)} technique~\cite{EH06} 
for the special case of (hyper-)graph transformation
in Section~\ref{sec:preliminaries}.
To make the analogy between \textsc{ccs} and \textsc{bc}
as formal as possible, 
we introduce the system \textsc{sosbc}
in Section~\ref{sec:SOSsemantics}, 
which is meant to provide the uninitiated reader with
a new perspective on the \textsc{bc} technique.
Moreover, 
the system \textsc{sosbc} emphasizes the ``local'' character of graph transformations
as every transition can be decomposed into a ``basic'' action
in some context. 
In particular, we do not have any counterpart to the communication rule of \textsc{ccs}, 
which shall be addressed in Section~\ref{sec:interaction-rule}.
We illustrate why it is not evident when and how two labeled transitions
of two states that share their interface can be combined into a single
synchronized action. 
However, we will be able to describe sufficient conditions 
on (hyper-)graph transformation systems
that allow to derive the counterpart 
of the communication rule of \textsc{ccs} in the system \textsc{sosbc}. 
Systems of this kind have a natural notion of  ``complementarity of actions''
in the \textsc{lts}. 

\section{Preliminaries}
\label{sec:preliminaries}
 We first recall the standard definition of (hyper-)graphs and 
a formalism of transformation of
hyper-graphs (following the double pushout approach). 
We also present the 
labelled transition semantics for
hyper-graph transformation systems
that has been proposed in~\cite{EH06}.
{
In the present paper, 
the more general case of categories of graph-like structures
is not of central importance. 
However, 
some of the proofs will use basic results of category theory. 
}

\newcommand{\hg}[1][]{%
  (E_{#1},V_{#1},\ell_{#1} , \cnct_{#1} )%
}
\newcommand{\hgm}[1]{%
  (#1_E,#1_V)%
}
\begin{definition}[Hypergraphs and hypergraph morphisms]
  Let $\Lambda$ be a set of \emph{labels}
  with associated \emph{arity} function $\A \colon \Lambda \to \setN$. 
  A \emph{($\Lambda$-labelled) hyper-graph} is a tuple
  $G = \hg$
  where $E$ is a set of \emph{(hyper-)edges}, 
  $V$ is a set of \emph{vertices} or \emph{nodes},
  $\ell \colon E \to \Lambda$ is the  \emph{labelling function}, 
  and $\cnct$ is the \emph{connection} function, 
  which assigns to each edge $e \in E$ 
  a string (e.g. a finite sequence) of \emph{incident vertices} $\cnct(e) = v_1\cdots v_n$ 
  of length $\A (\ell(e))=n$ (where $\{v_1,\dots,v_n\}\subseteq V$). 
  Let $v \in V$ be a node;
  its \emph{degree}, written $\deg(v)$
  is the number of edges of which it is an incident node, 
  i.e.\ $\deg(v) = |\{ e \in E \mid v \text{ incident to } e \}|$
  (where for any finite set $M$, the number of elements of $M$ is $|M|$).
  We also write $v \in G$ and $e \in G$ if  $v \in V$ and $e \in E$. 

  Let $G_i = \hg[i]$ ($i \in \{ 1,2\}$)
  be hyper-graphs;
  a \emph{hyper-graph morphism} from $G_1$ to $G_2$, 
  written $f \colon G_1 \to G_2$
  is a pair of functions $f = (f_E \colon E_1 \to E_2 , f_V \colon V_1 \to V_2)$
  such that $\ell_2 \circ f_E = \ell_1$
  and for each edge $e_1 \in E_1$ with attached nodes
  $\cnct (e) = v_1 \cdots v_n$ we have
  $\cnct_2 (f_E(e)) = f_V(v_1) \cdots f_V(v_n)$.
  A hyper-graph morphism $f = \hgm{f} \colon G_1 \to G_2$ is \emph{injective (bijective)}
  if both $f_E$ and $f_V$ are injective (bijective); 
  it is an inclusion if both $f_E(e) = e$ and $f_V(v) = v$ 
  hold for all $e \in E_1$ and $v \in V_1$.
  We write $G_1 \to G_2$ or $G_2 \gets G_1$ if there is an inclusion from $G_1$ to $G_2$, 
  in which case $G_1$ is a \emph{sub-graph} of $G_2$. 
\end{definition}

To define double pushout graph transformation
and the Borrowed Context technique~\cite{EH06}, 
we will need the following constructions of hyper-graphs, 
which roughly amount to intersection and union of hyper-graphs.

\begin{definition}[Pullbacks \& pushouts of monos]
  Let $G_i = \hg[i]$ ($i \in \{0,1,2,3\}$) be hyper-graphs
  and let $G_1 \to G_3 \gets  G_2$
  be inclusions.
  The \emph{intersection of $G_1$ and $G_2$}
  is the hyper-graph $G' = (E_1 \cap E_2, V_1 \cap V_2, \ell', \cnct')$ 
  where $\ell' (e) = \ell_1(e)$ and $\cnct'(e) = \cnct_2(e)$ 
  for all $e \in E_1 \cap E_2$. 
  The \emph{pullback of $G_1 \to G_3 \gets  G_2$} 
  is the pair of inclusions 
  $G_1 \gets G'\to G_2$
  and the resulting square is a pullback square (see Figure~\ref{fig:PB-PO}).

  \begin{figure}[htb]
    \centering
    \begin{tikzpicture}
      \mnode{G_3}{0,1}
      \mnode{G_1}{0,0}
      \mnode{G_2}{1,1}
      \mnode{G'}{1,0}
      \POC{G'}{135}
      \foreach \u/\v in {G'/G_2, G'/G_1, G_1/G_3, G_2/G_3}
      {\ardraw{\u}{\v}};
    \end{tikzpicture}
    \qquad
    \begin{tikzpicture}
      \mnode{G_0}{0,1}
      \mnode{G_1}{0,0}
      \mnode{G_2}{1,1}
      \mnode{G''}{1,0}
      \POC{G''}{135}
      \foreach \u/\v in {G'/G_2, G'/G_1, G_1/G_0, G_2/G_0}
      {\ardraw{\v}{\u}};
    \end{tikzpicture}
    \caption{Pullback and pushout square}
    \label{fig:PB-PO}
  \end{figure}
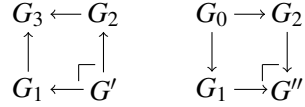

  Let  $G_1 \gets G_0 \to G_2$
  be inclusions;
  they are \emph{non-overlapping} if
  both $E_1 \cap E_2 \subseteq E_0$
  and $V_1 \cap V_2 \subseteq V_0$ hold.
  The \emph{pushout of non-overlapping inclusions $G_1 \gets G_0 \to G_2$} 
  is the pair of inclusions $G_1 \to G'' \gets G_2$ 
  where $G'' = (E_1 \cup E_2, V_1 \cup V_2, \ell'', \cnct'')$
  is the hyper-graph that satisfies
  \begin{displaymath}
    \ell''(e) =
    \begin{cases}
      \ell_1(e) & \text{ if } e \in E_1\\
      \ell_2(e) & \text{ otherwise } 
    \end{cases}
    \text{ and } 
    \cnct''(e) =
    \begin{cases}
      \cnct_1(e) & \text{ if } e \in E_1\\
      \cnct_2(e) & \text{ otherwise } 
    \end{cases}
  \end{displaymath}
  for all $e \in E_1 \cup E_2$. 
\end{definition}


Finally, we are ready to introduce graph transformation systems and 
their labelled transition semantics.

  \begin{definition}[Rules and graph transformation systems]
    A \emph{rule (scheme)} is a pair of non-overlapping inclusions of hyper-graphs
    $\rho = (L \gets I \to R)$.
    Let $A,B$ be hyper-graphs such that $A \gets L$
    and moreover\smallskip\,\,
    \begin{minipage}[c]{.8\linewidth}     
    $A \gets I \to R$ is non-overlapping.
    Now,
    \emph{$\rho$ transforms $A$ to $B$} 
    if there exists a diagram 
    as shown on the right
    such that the two squares are pushouts 
    and there is an isomorphism $\iota \colon B' \to B$.
    A \emph{graph transformation system \textsc{(gts)}}
    is pair $\S = (\Lambda, \R)$
    where $\Lambda$ is a set of labels and $\R$ is a set of rules.
    \end{minipage}%
    \begin{minipage}[c]{.2\linewidth}
        \centerline{
      \begin{tikzpicture}
        \mnode{L}{-1,0}
        \mnode{I}{0,0}
        \mnode{R}{1,0}
        \mnode{A}{-1,-1}
        \mnode{D}{0,-1}
        \mnode[B]{B'}{1,-1}
        \foreach \u/\v in {I/L,I/R,L/A,I/D,R/B,D/A,D/B}
        {
          \ardraw{\u}{\v}
        };
        \POC{B}{135}
        \POC{A}{45}
      \end{tikzpicture}
    }
    \end{minipage}%
  \end{definition}

A graph transformation rule can be understood as follows.
Whenever the left hand side $L$ is (isomorphic to) a sub-graph of some graph $A$
then this sub-graph can be ``removed'' from $A$, yielding the graph $D$. 
The vacant place in $D$ is then ``replaced'' by the right hand side $R$ of the rule.
The middleman $I$ is the memory of the connections $L$ had with the rest of the graph in order for $R$ to be attached in exactly the same place.

We now present an example that will be used throughout the paper to illustrate 
the main ideas.
\begin{example}
  The system $\S_{ex} = (\Lambda,\R)$ will be the following one in the
  sequel:
  $\Lambda = \{\alpha,\beta,\gamma, \dots \}$ such that $\ar(\alpha) =
  2$, $\ar(\beta) = 3$ and $\ar(\gamma) = 1$;
  moreover
  $\R$ is the set of rules given in Figure~\ref{fig:contrex-rules}
  where the $R_i$ represent different graphs (e.g.\ edges with labels $R_i$).
  \label{ex:runningex}
\end{example}

To keep the graphical representations clear,
all inclusions in the running example are given implicitly by the spatial arrangement of nodes and edges.


\begin{figure}[htbp]
 \centering
  \subfigure[Rule ``$\alpha/\beta$'']{
  \begin{minipage}{.4\textwidth}
    \begin{tabular}{@{}c@{}}
      \scalebox{.45}{\input{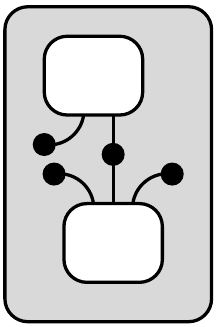_t}}
    \end{tabular}
    $\leftarrow$
    \begin{tabular}{@{}c@{}}
      \scalebox{.45}{\input{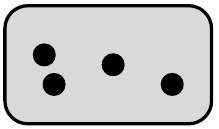_t}} 
    \end{tabular} 
    $\rightarrow$
    \begin{tabular}{@{}c@{}}
      \scalebox{.45}{\input{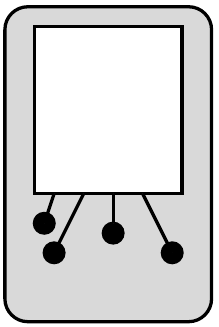_t}}
    \end{tabular}
  \end{minipage}
  }
  \subfigure[Rule ``$\alpha/\gamma$'']{
  \begin{minipage}{.4\textwidth}
    \begin{tabular}{@{}c@{}}
      \scalebox{.45}{\input{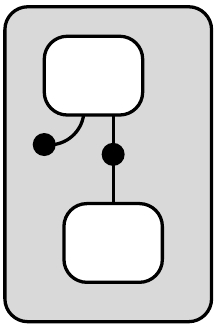_t}}
    \end{tabular}
    $\leftarrow$
    \begin{tabular}{@{}c@{}}
      \scalebox{.45}{\input{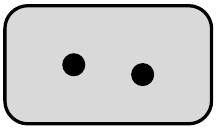_t}} 
    \end{tabular} 
    $\rightarrow$
    \begin{tabular}{@{}c@{}}
      \scalebox{.45}{\input{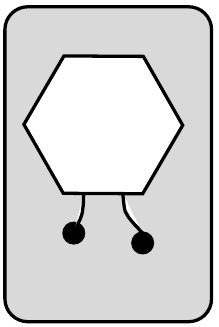_t}}
    \end{tabular}
  \end{minipage}
  }
  \subfigure[Rule ``$\beta/\gamma$'']{
  \begin{minipage}{.4\textwidth}
    \begin{tabular}{@{}c@{}}
      \scalebox{.45}{\input{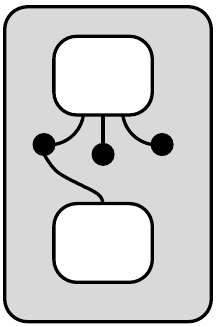_t}}
    \end{tabular}
    $\leftarrow$
    \begin{tabular}{@{}c@{}}
      \scalebox{.45}{\input{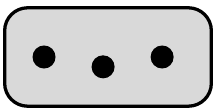_t}} 
    \end{tabular} 
    $\rightarrow$
    \begin{tabular}{@{}c@{}}
      \scalebox{.45}{\input{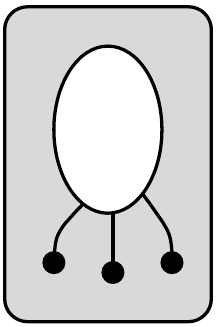_t}}
    \end{tabular}
  \end{minipage}
  }
  \subfigure[Rule ``$\alpha/\beta/\gamma$'']{
  \begin{minipage}{.4\textwidth}
    \begin{tabular}{@{}c@{}}
      \scalebox{.45}{\input{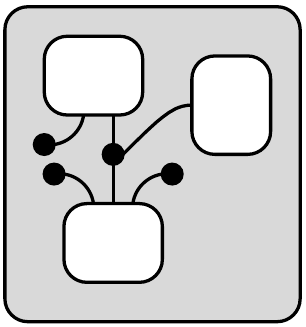_t}}
    \end{tabular}
    $\leftarrow$
    \begin{tabular}{@{}c@{}}
      \scalebox{.45}{\input{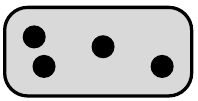_t}} 
    \end{tabular} 
    $\rightarrow$
    \begin{tabular}{@{}c@{}}
      \scalebox{.45}{\input{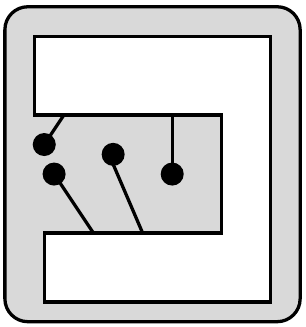_t}}
    \end{tabular}
  \label{subfig:alfabetagamma}
  \end{minipage}
  }
 \caption{Reaction rules of $\S_{ex}$.}
 \label{fig:contrex-rules}
\end{figure}

  \begin{remark}[Rule instances]
    Given a rule $L \gets I \to R$ 
    and a graph $A$ such that $A \gets L$, 
    one can assume w.l.o.g.\ 
    that $A \gets I \to R$
    is non-overlapping.
    The reason is that in each case, 
    the rule $L \gets I \to R$
    could be replaced by an \emph{isomorphic} ``rule  instance''
    $\rho'= L' \gets I' \to R'$
    (based on the standard notion of rule isomorphism). 
  \label{rem:ruleInstance}
  \end{remark}

  \marginpar{\begin{example}[Graph Rewriting]
    [TODO, interaction nets ? ]
  \end{example}}

  In fact the result of each transformation step is unique (up to
  isomorphism). 
  This is a consequence of the following fact. \medskip

  \noindent\begin{minipage}[c]{.77\linewidth}
    \begin{fact}[Pushout complements]
      \label{fact:poc}
      Let $G_2 \gets G_1 \gets G_0$ be a pair of hyper-graph
      inclusions where $G_i = \hg[i]$ ($i \in \{0,1,2\}$) such that
      for all $v \in V_1 \setminus V_0$ there does not exist any edge
      $e \in E_2 \setminus E_0$ such that $v$ is incident to $e$.
      Then there exists a unique sub-graph $G_2 \gets D$ such that
      \eqref{eq:poc}~is a pushout square.
    \end{fact}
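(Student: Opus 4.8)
The plan is to produce the pushout complement $D$ by an explicit description of its edges and vertices, to verify that this description really yields a sub-graph of $G_2$ — the only point where the incidence hypothesis enters — to check that the square \eqref{eq:poc} is a pushout by unfolding the componentwise formula for pushouts of non-overlapping inclusions recalled above, and finally to establish uniqueness by a short set-theoretic computation on edge and vertex sets.

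\emph{Construction and well-definedness.} I would set $E_D := E_0 \cup (E_2 \setminus E_1)$ and $V_D := V_0 \cup (V_2 \setminus V_1)$, with $\ell_D$ and $\cnct_D$ the restrictions of $\ell_2$ and $\cnct_2$ to $E_D$. The step I expect to be the main obstacle is to show that $D := (E_D, V_D, \ell_D, \cnct_D)$ is a well-defined hyper-graph, that is, that $\cnct_2(e)$ is a string over $V_D$ for every $e \in E_D$; together with $E_D \subseteq E_2$ and $V_D \subseteq V_2$ this gives that $D$ is a sub-graph of $G_2$. For $e \in E_0$ this holds because $G_0 \to G_1 \to G_2$ are inclusions, so $\cnct_2(e) = \cnct_0(e)$ is a string over $V_0 \subseteq V_D$. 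For $e \in E_2 \setminus E_1$ one has in particular $e \in E_2 \setminus E_0$, whence the hypothesis applies: no vertex of $V_1 \setminus V_0$ is incident to $e$, so every vertex occurring in $\cnct_2(e)$ lies in $V_2 \setminus (V_1 \setminus V_0) = V_0 \cup (V_2 \setminus V_1) = V_D$. It is precisely here that the hypothesis is indispensable: if some edge of $E_2 \setminus E_1$ were attached to a vertex of $V_1 \setminus V_0$, it could be removed from $G_2$ neither together with that vertex nor without it, and no pushout complement would exist.

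\emph{Pushout and uniqueness.} Since $E_1 \cap (E_2 \setminus E_1) = \emptyset$ and $V_1 \cap (V_2 \setminus V_1) = \emptyset$, the span $G_1 \gets G_0 \to D$ is non-overlapping, so its pushout $G_1 \to G'' \gets D$ is defined; by the formula recalled above $G''$ has edge set $E_1 \cup E_D = E_2$ and vertex set $V_1 \cup V_D = V_2$, while its labelling (resp.\ connection) function agrees with $\ell_1$ (resp.\ $\cnct_1$) on $E_1$ and with $\ell_2$ (resp.\ $\cnct_2$) on $E_D$, hence equals $\ell_2$ (resp.\ $\cnct_2$) as $G_1$ is a sub-graph of $G_2$. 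Thus $G'' = G_2$ and \eqref{eq:poc} is a pushout. For uniqueness, suppose $D'$ is a sub-graph of $G_2$ for which \eqref{eq:poc} is a pushout square, i.e.\ $G_0 \to D'$ is an inclusion, $G_1 \gets G_0 \to D'$ is non-overlapping, and $G_2$ is its pushout. Non-overlapping gives $E_{D'} \cap E_1 \subseteq E_0$, and $E_0 \subseteq E_{D'}$ gives the reverse inclusion, so $E_{D'} \cap E_1 = E_0$; moreover $E_{D'} \cup E_1 = E_2$ forces $E_{D'} \setminus E_1 = E_2 \setminus E_1$, and therefore $E_{D'} = (E_{D'} \cap E_1) \cup (E_{D'} \setminus E_1) = E_0 \cup (E_2 \setminus E_1) = E_D$; the symmetric computation yields $V_{D'} = V_D$. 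Since $D'$ is a sub-graph of $G_2$, its label and connection functions must be the restrictions of $\ell_2$ and $\cnct_2$, so $D' = D$, as required.
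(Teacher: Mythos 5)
Your proof is correct and complete: the explicit description $E_D = E_0 \cup (E_2 \setminus E_1)$, $V_D = V_0 \cup (V_2 \setminus V_1)$ is the standard construction of the pushout complement, you correctly isolate the well-definedness of $\cnct_D$ as the one place where the incidence (``dangling edge'') hypothesis is needed, and the pushout and uniqueness verifications via the paper's componentwise pushout formula are sound. The paper states this result as a Fact without including a proof, so there is nothing to compare against; your argument is the natural one and could stand as the missing proof.
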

  \end{minipage}%
  \begin{minipage}[c]{.23\linewidth}
        \begin{equation}
        \label{eq:poc}
      \begin{tikzpicture}[baseline={(current bounding box.west)}]
        \mnode{G_1}{0,0}
        \mnode{G_0}{1,0}
        \mnode{G_2}{0,-1}
        \mnode{D}{1,-1}
        \foreach \u /\v in {G_0/G_1,G_1/G_2,D/G_2,G_0/D}
        {
          \ardraw{\u}{\v}
        };
        \POC{G_2}{45}
      \end{tikzpicture}
    \end{equation}
  \end{minipage}
  \medskip

  \begin{definition}[Pushout Complement]
    Let $G_2 \gets G_1 \gets G_0$ be a pair of hyper-graph inclusions
    that satisfy the conditions of Fact~\ref{fact:poc};
    the unique completion $G_2 \gets D \gets G_0$ in \eqref{eq:poc}
    is the \emph{pushout complement} of $G_2 \gets G_1 \gets G_0$.
  \end{definition}

\begin{definition}[Labelled transition system]
  A \emph{labelled transition system} (\textsc{lts}) is a tuple $(S, \L , R)$
  where $S$ is a set of \emph{states},
  $\L$ is a set of \emph{labels} 
  and $R \subseteq S \times \L \times S$ 
  is the \emph{transition relation}. 
  We  write 
  \begin{displaymath}
    s \xrightarrow{\alpha} s' 
  \end{displaymath}
  if $(s,\alpha,s') \in R$
  and say that $s$ can evolve to $s'$ by performing $\alpha$. 
\end{definition}
\medskip 

\noindent\begin{minipage}[c]{.7\linewidth}
\begin{definition}[DPOBC]%
  \label{def:dpobc}
  Let $\S = (\Lambda, \R)$ be a graph transformation system.
  Its \textsc{lts}
  has all inclusions of hyper-graphs $J \to G$ 
  as \emph{states} where $J$ is called the \emph{interface};
  the labels are all pairs of inclusions
  $J \to F \gets K$,
  and
  a state $J \to G$ evolves to another one $K \to H$ 
  if there is a 
  diagram as shown on the right, 
which is called a \emph{\textsc{dpobc}-diagram}  or just a \emph{\textsc{bc}-diagram}. 
In this diagram, the graph $D$ is called the \emph{partial match of $L$}.
\end{definition}
\end{minipage}%
\begin{minipage}[c]{.3\linewidth}
  \begin{center}
\begin{tikzpicture}[scale=1,baseline={(current bounding box.west)},semithick]
  \mnode[D]{D}{0,3}
  \mnode[L]{L}{1,3}
  \mnode[I]{I}{2,3}
  \mnode[R]{R}{3,3}
  \mnode[G]{G}{0,2}
  \mnode[Gc]{G_c}{1,2}
  \mnode[C]{C}{2,2}
  \mnode[H]{H}{3,2}
  \mnode[J]{J}{0,1}
  \mnode[F]{F}{1,1}
  \mnode[K]{K}{2,1}

  \foreach \u/\v in {D/L,I/L,I/R,D/G,L/Gc,I/C,R/H,G/Gc,J/G,J/F,F/Gc,K/C,R/H,C/Gc,K/F,C/H,K/H}
  {\ardrawd{\u}{\v}};
  
  \POC{Gc}{45}
  \POC{Gc}{135}
  \POC{Gc}{225}
  \POC{H}{135}
  \POC{K}{135}

\end{tikzpicture}
\end{center}
\end{minipage}\medskip

For a technical justification of this definition, see \cite{sassonesobocinski:njc}, but let us give some intuitions on what this diagram expresses.
States are inclusions, where the ``larger'' part models the whole ``internal'' state of the system while the ``smaller'' part, the interface, models the part that is directly accessible to the environment and allows for (non-trivial) interaction. As a particular simple example, one could have a Petri net where the set of places (with markings) is the complete state and some of the place are ``open'' to the environment such that interaction takes place by exchange of tokens.

The addition of agents/resources from the environment might result in
``new'' reactions, which have not been possible before. The idea of the
\textsc{lts} semantics for graph transformation is to consider (the addition
of) ``minimal'' contexts that allow for ``new'' reactions as labels. 
The minimality requirement of an addition $J \to E$ or $J \to F$ is
captured by the two leftmost squares in the BC diagram above:
the addition $J \to F$ is ``just enough'' to complete part of the left
hand side $L$ of some rule. If the reaction actually takes place,
which is captured by the other two squares in the upper row in the BC
diagram, some agents might disappear / some resources might be used
(depending on the preferred metaphor) and new ones might
appear. Finally the pullback square in the BC diagram restricts the
changes to obtain the new interface into the result state after
reaction. As different rules might result in different deletion effects
that are ``visible'' to the environment, the full label of each such
``new'' reaction is the ``trigger'' $J \to F$ together with the
``observable'' change $F \gets K$ (with state $K \to H$ after interaction).


%
%
\section{Three Layer SOS semantics}
\label{sec:SOSsemantics}
We start with a reformulation of the borrowed context technique
that breaks the ``monolithic'' \textsc{bc}-step
into axioms (that allow to derive the \emph{basic actions})
and two rules that allow to perform these basic actions within suitable contexts.
The axioms corresponds to the \textsc{ccs}-axioms
that describe that the process $\alpha.P$ can perform the action $\alpha$ 
and then behaves as~$P$, written $\alpha.P \to[\alpha] P$
where $\alpha$ ranges over the actions $a, \bar a$, and $\tau$. 
In the case of graphs, 
each rule $L \gets I \to R$
gives rise to such a set of actions. 
More precisely, 
each subgraph $D$ of $L$ can be seen as an ``action'' with 
co-action $\widehat D^L \to L$ such that $L$ is the union of $D$ and $\widehat D^L$.
For example, in the rule $\alpha/\beta$,
both edges $\alpha$ and $\beta$ yield (complementary) basic actions.

Formally, in Table~\ref{tab:transitions}, we have the family of \emph{Basic Action} axioms. 
It essentially represents all the possible uses of a transformation rule.
In an (encoding of) \textsc{ccs},
the left hand side would be a pair of unary edges $a$ and $\bar a$, 
which both disappear during reaction. 
Now, if only $a$ is present ``within'' the system, 
it needs $\bar a$ to perform a reaction; 
thus, the part $a$ of the left hand side induces the (inter-)action that consists in ``borrowing'' $\bar a$ and deleting both edges 
(and similarly for $\bar a$). 
In general, e.g.\ in the rule $\alpha/\beta/\gamma$ there might be more than
two edges that are involved in a reaction and thus we have a whole family of actions. 
More precisely, 
each portion of a left hand side induces the action that consists in borrowing the missing part to perform the reaction
(thus obtaining the coplete left hand side),
followed by applying the changes that are described by the right part of the rule.

Next, we shall give counterparts for two \textsc{ccs}-rules
that describe that an action can be performed in parallel to another process
and under a restriction. 
More precisely, 
whenever we have the transition $P \to[\alpha] P'$
and another process $Q$, 
then there is also a transition $P \parallel Q \to[\alpha] P'\parallel Q$;
similarly, 
we also have $(\nu b)P \to[\alpha] (\nu b)P'$ whenever
$\alpha\notin\{\bar b, b\}$.
More abstractly, 
actions are preserved by certain contexts. 
The notion of context in the case of graph transformation, 
which will be the counterpart of process contexts such as $P \parallel [\cdot]$ and $(\nu b) [\cdot]$,
is as follows. 
\begin{definition}[Context]
  \label{def:the-context}
  A \emph{context} is a pair of inclusions $C=\context{J}{E}{J'}$. 
  Let $J \to G$ be a state
  (such that $E \gets J \to G$ is non-overlapping);
  the \emph{combination} of $J \to G$ with the context~$C$,
  written $C[J \to G]$, 
  is the inclusion of $J'$ into the pushout 
  of $E \gets J \to G$
  as illustrated in the following display.
  \begin{displaymath}
    \text{state:}
    \begin{tikzpicture}[baseline={(0,0.5)}]
      \mnode{J}{0,0}
      \mnode{G}{0,1}
      \foreach \u/\v in {J/G}{
        \ardraw{\u}{\v}
      };
    \end{tikzpicture}
    \quad
    \text{context:}    
    \begin{tikzpicture}[baseline={(0,0.5)}]
      \mnode{J}{0,0}
      \mnode{E}{1,0}
      \mnode{J'}{2,0}
      \foreach \u/\v in {J/E,J'/E}{
        \ardraw{\u}{\v}
      };
    \end{tikzpicture}
    \quad\text{construction:}
    \begin{tikzpicture}[baseline={(0,0.5)}]
      \mnode{J}{0,0}
      \mnode{G}{0,1}
      \mnode{E}{1,0}
      \mnode{J'}{2,0}
      \mnode[PO]{\bar G}{1,1}
      \POC{PO}{-135}
      \foreach \u/\v in {J/E,J/G,E/PO,G/PO, J'/PO,J'/E}{
        \ardraw{\u}{\v}
      };
    \end{tikzpicture}
    \quad
    \text{combination:}    
    \begin{tikzpicture}[baseline={(0,0.5)}]
      \mnode[J]{J'}{0,0}
      \mnode[G]{\bar G}{0,1}
      \foreach \u/\v in {J/G}{
        \ardraw{\u}{\v}
      };
    \end{tikzpicture}
  \end{displaymath}
\end{definition}

The left inclusion of the context, i.e.\ $J \to E$, can also be seen as a state with the same interface.
The pushout then gives the result of ``gluing'' $E$ to the original $G$ at the interface $J$;
the second inclusion $J' \to E$ models a new interface,
which possibly contains part of $J$ and additional ``new'' entities in $E$.

With this general notion of context at hand, 
we shall next address the counterpart of name restriction, 
which we call \emph{interface narrowing}, 
the second rule family in Table~\ref{tab:transitions}.
In \textsc{ccs}, 
the restriction $(\nu a)$ 
preserves only those actions that do not involve $a$. 
The counterpart of the context  $(\nu a)[\cdot]$
is a context of the form  $J \to J \gets J'$. 
In certain cases, 
one can ``narrow'' a label
while ``maintaining'' the ``proper'' action
as made formal in the following definition. 

\begin{definition}[Narrowing]
  A \emph{narrowing context}
  is a context of the form $C = J \to J \gets J'$.
  Let $J \rightarrow F \leftarrow K$ be a label
  such that the pushout complement of $F \gets J \gets J'$ exists;
  then the \emph{$C$-narrowing} of the label, 
  written $C[J \rightarrow F \leftarrow K]$ 
  is the lower row in the following display
\begin{displaymath}
\begin{tikzpicture}[scale=1.1,baseline={(current bounding box.west)},semithick]
  \mnode[J']{\makebox[0pt][r]{\ensuremath{C[J \rightarrow F \leftarrow K]:= 
        {}}}J'}{0,0}
  \mnode{J}{0,1}
  \mnode{F'}{1,0}
  \mnode{F}{1,1}
  \mnode{K'}{2,0}
  \mnode{K}{2,1}
  \foreach \u/\v in {J'/F',K'/F',J/F,K/F,J'/J,F'/F,K'/K}
  {\ardrawd{\u}{\v}};
  \POC{F}{225}
  \POC{K'}{135}
  %
\end{tikzpicture}
\qquad \text{ where } C = J \to J \gets J'
\end{displaymath}
where the left square is a pushout and the right one a pullback. 
Whenever we write $C[J \rightarrow F \leftarrow K]$,
we assume that the relevant pushout complement exists. 

If we think of the interface as the set of free names of a process, 
then restricting a name means removal from the interface.
Thus,  $J'$ is the set of the remaining free names. 
If the pushout complement $F'$ exists, it represents $F$ with the restricted names erased.
Finally, since a pullback here can be seen as an intersection,
$K'$ is $K$ without the restricted names.
So we finally obtain the ``same'' label where ``irrelevant'' names are not mentioned. 
It is of course not always possible to narrow the interface.
For instance, 
one cannot restrict the names that are involved in labelled transitions of \textsc{ccs}-like process calculi. 
This impossibility is captured by the non-existence of the pushout complement.

%
\end{definition}
With the notion of narrowing, 
we can finally define the interface narrowing rule in Table~\ref{tab:transitions}.

\marginpar{Anrei convinced of this explanation.
Tobias not :) -- but nvm. } 
The final rule in Table~\ref{tab:transitions}
captures the counterpart of performing an action 
in parallel composition with another process $P$. 
In the case of graph transformation, 
this case is non-trivial
since even the pure addition of context potentially interferes
with the action of some state $J \to G$.
For example, if an interaction involves the deletion of an (isolated) node, 
the addition of an edge to this node inhibits the reaction.
However, 
for each transition there is a natural notion of non-inhibiting context; 
moreover, 
to stay close to the intuition that parallel composition 
with a process $P$ only adds new resources
and to avoid overlap with the narrowing rule,  
we restrict to monotone contexts.
\medskip

\noindent\begin{minipage}[c]{.75\linewidth}
\begin{definition}[Compatible contexts]
  \label{def:compatible-context}
  Let  $C = \context{J}{E}{\bar J}$ be a context;
  it is \emph{monotone}   if $J \to \bar J$. 
  Let $\redlabel{J}{F}{K}$
  be a label;
  now $C$ is \emph{non-inhibiting w.r.t.\ $\redlabel{J}{F}{K}$} 
  if it is possible to construct the  diagram
\eqref{eq:nonInhib}
where both squares are pushouts. 
Finally, 
a context $\context{J}{E}{\bar J}$ is \emph{compatible} with the label $\redlabel{J}{F}{K}$
if it is non-inhibiting w.r.t.\ it and monotone.
\end{definition}
\end{minipage}
\begin{minipage}[c]{.25\linewidth}
  \begin{equation}
\begin{tikzpicture}[scale=1.1,baseline={(current bounding box.west)},semithick]

  \mnode{E}{0,0}
  \mnode{J}{0,1}
  \mnode{E_1}{1,0}
  \mnode{F}{1,1}
  \mnode{E'}{2,0}
  \mnode{K}{2,1}

  \foreach \u/\v in {J/F,K/F,E/E_1,E'/E_1,J/E,F/E_1,K/E'}
  {\ardrawd{\u}{\v}};
  
  \POC{E_1}{45}
  \POC{E_1}{135}
\end{tikzpicture}
\label{eq:nonInhib}
\end{equation}
\end{minipage}
\medskip

In a label $\redlabel{J}{F}{K}$, the left inclusion represents the addition of new entities that ``trigger'' a certain reaction.
A compatible context is simply a context that is able to provide at least $F$, usually  more than $F$, 
while not attaching new edges to nodes that disappear during reaction.

The last rule in the \textsc{sosbc}-system of
Table~\ref{tab:transitions}
is the embedding of a whole transition into a monotone context. 
To define this properly, 
we introduce a partial operation for the ``combination''
of co-spans
(which happens to be a particular type of relative pushout of co-spans); 
this generalizes the narrowing construction.

\begin{definition}[Cospan combination]
Let $C= (J \to F \gets K)$ and $\bar C = (J \to E \gets \bar J)$ be
two cospans. 
They are \emph{combinable}
if there exists a diagram of the following form. 
\begin{displaymath}
  \begin{tikzpicture}[scale=1.1,baseline={(current bounding box.west)},semithick]

  \mnode{E}{0,0}
  \mnode{J}{0,1}
  \mnode{E_1}{1,0}
  \mnode{F}{1,1}
  \mnode{E'}{2,0}
  \mnode{K}{2,1}
  \mnode[Jb]{\bar J}{0,-1}
  \mnode[Fb]{\bar F}{1,-1}
  \mnode[Kb]{\bar K\makebox[0pt][l]{\ensuremath{{} =: \bar C[J \to F
        \gets K]}}}{2,-1}

  \foreach \u/\v in {J/F,K/F,E/E_1,E'/E_1,J/E,F/E_1,K/E',Jb/Fb,Kb/Fb,Jb/E,Fb/E_1,Kb/E'}
  {\ardrawd{\u}{\v}};
  
  \POC{E_1}{45}
  \POC{E_1}{135}
  \POC{E_1}{-135}
  \POC{Kb}{135}
   
\end{tikzpicture}
\end{displaymath}
The label $\bar J \to \bar F \gets \bar K$ is the \emph{combination of
  $C$ with  $\bar C$}, 
and is denoted by $\bar C[J \to F \gets K]$. 
\end{definition}

In fact, it is easy to show that compatible contexts are combinable
with their label. 
\begin{lemma}
  Given a reduction label $\redlabel{J}{F}{K}$ and a compatible context $\context{J}{E}{\bar J}$ for it, we can split the diagram $\ref{eq:nonInhib}$ in order to get

\begin{displaymath}
\begin{tikzpicture}[scale=1.1,baseline={(current bounding box.west)},semithick]

  \mnode{E}{0,0}
  \mnode{E_1}{1,0}
  \mnode{E'}{2,0}
  \mnode[barJ]{\bar J }{0,.75}
  \mnode[barF]{\bar F}{1,.75}
  \mnode[barK]{\bar K}{2,.75}
  \mnode{J}{0,1.5}
  \mnode{F}{1,1.5}
  \mnode{K}{2,1.5}

  \foreach \u/\v in {J/barJ,F/barF,K/barK,barJ/E,barF/E_1,barK/E',J/F,K/F,barJ/barF,barK/barF,E/E_1,E'/E_1}
  {\ardrawd{\u}{\v}};
  
  \POC{barF}{135}
  \POC{E_1}{135}
  \POC{K}{225}
  \POC{barK}{225}

\end{tikzpicture}
\qquad \text{ and } \qquad
\begin{tikzpicture}[scale=1.1,baseline={(current bounding box.west)},semithick]

  \mnode{E}{0,0}
  \mnode{J}{0,1}
  \mnode{E_1}{1,0}
  \mnode{F}{1,1}
  \mnode{E'}{2,0}
  \mnode{K}{2,1}
  \mnode[Jb]{\bar J}{0,-1}
  \mnode[Fb]{\bar F}{1,-1}
  \mnode[Kb]{\bar K\makebox[0pt][l]{\ensuremath{{} = \bar C[J \to F
        \gets K]}}}{2,-1}

  \foreach \u/\v in {J/F,K/F,E/E_1,E'/E_1,J/E,F/E_1,K/E',Jb/Fb,Kb/Fb,Jb/E,Fb/E_1,Kb/E'}
  {\ardrawd{\u}{\v}};
  
  \POC{E_1}{45}
  \POC{E_1}{135}
  \POC{E_1}{-135}
  \POC{Kb}{135}
   
\end{tikzpicture}. 
\end{displaymath}
\label{lem:inhibitRedLab}
\end{lemma}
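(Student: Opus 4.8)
The plan is to obtain both displayed diagrams by inserting the intermediate cospan $\bar J\to\bar F\gets\bar K$ into \eqref{eq:nonInhib} and then reading off the squares; the only machinery I will need is the usual pasting and cancellation lemmas for pushouts plus one application of the van Kampen property. These are available because the category of $\Lambda$-labelled hyper-graphs is, up to equivalence, a presheaf topos — hence adhesive — so that pushouts along monos exist, are also pullbacks, are van Kampen squares, and paste and cancel as usual. I will also assume, by renaming as in Remark~\ref{rem:ruleInstance}, that every span whose pushout I take is non-overlapping, so that these abstract pushouts agree with the concrete ones of Section~\ref{sec:preliminaries}.

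First I would build the left column. Monotonicity gives an inclusion $J\to\bar J$, and since $J$ and $\bar J$ are sub-graphs of $E$ the triangle $J\to\bar J\to E$ commutes with the context leg $J\to E$. I take $\bar F$ to be the pushout of $F\gets J\to\bar J$; this is the top-left square of the first diagram, with $F\to\bar F$ and $\bar J\to\bar F$ monic. The left square of \eqref{eq:nonInhib} presents $E_1$ as the pushout of $F\gets J\to E$; factoring $J\to E$ through $\bar J$ and cancelling the square just built yields that $\bar J\to\bar F$, $\bar J\to E$, $\bar F\to E_1$, $E\to E_1$ is a pushout, i.e.\ $E_1\cong\bar F\cup_{\bar J}E$, which is the bottom-left square.

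Next I would build the right column. The right square of \eqref{eq:nonInhib} presents $E_1$ as the pushout of $F\gets K\to E'$ (equivalently, $E'$ is the pushout complement of $E_1\gets F\gets K$). Since $F\to E_1$ factors through the mono $\bar F\to E_1$, I pull this pushout back along $\bar F\to E_1$: one gets $F\times_{E_1}\bar F\cong F$ and $K\times_{E_1}\bar F\cong K$ because $F\to E_1$ and $K\to E_1$ already factor monically through $\bar F$, and I set $\bar K:=E'\times_{E_1}\bar F$. Van Kampen then makes the pulled-back square $K\to F$, $K\to\bar K$, $F\to\bar F$, $\bar K\to\bar F$ a pushout — hence also a pullback — which is the top-right square; cancelling it against $E_1\cong F\cup_K E'$ shows $\bar F\to E_1$, $\bar F\gets\bar K$, $\bar K\to E'$, $E'\to E_1$ to be a pushout, i.e.\ $E_1\cong\bar F\cup_{\bar K}E'$, the bottom-right square. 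The remaining subdiagrams only have to commute, which they do by construction, so the first diagram is complete; and since $\bar J\to\bar F\gets\bar K$ now sits in a diagram with the four pushout squares $(J,F,E,E_1)$, $(F,K,E_1,E')$, $(\bar J,\bar F,E,E_1)$, $(\bar F,\bar K,E_1,E')$, it is by definition the combination $\bar C[J\to F\gets K]$ with $\bar C=(J\to E\gets\bar J)$, which is exactly the second displayed diagram.

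I expect the only real difficulty to be bookkeeping rather than anything conceptual: one must check that every arrow fed to van Kampen and to pushout cancellation is monic — so that the isomorphisms $F\times_{E_1}\bar F\cong F$, $K\times_{E_1}\bar F\cong K$ are legitimate and the concrete pushout-complement vocabulary of Section~\ref{sec:preliminaries} applies — but this is immediate here, since every arrow occurring is an inclusion or a pushout of an inclusion along a mono, hence monic.
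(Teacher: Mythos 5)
Your construction is correct and is essentially the intended one: take $\bar F$ to be the pushout of $F \gets J \to \bar J$ and split the left pushout of \eqref{eq:nonInhib} by pushout cancellation along $J \to \bar J \to E$, then take $\bar K$ to be the pullback of $\bar F \to E_1 \gets E'$ and split the right square accordingly, after which the second display is just a rearrangement of the squares already obtained. The only remark worth making is that the appeal to the van Kampen property is slightly more than the marked corners of the diagram demand: since the pushout $(K,F,E',E_1)$ is along the mono $K \to F$ and is therefore already a pullback, the pullback property of $(K,F,\bar K,\bar F)$ follows from plain pullback pasting, $F\times_{\bar F}\bar K \cong F\times_{\bar F}(\bar F\times_{E_1}E') \cong F\times_{E_1}E' \cong K$, so adhesivity is needed only where the paper uses it implicitly anyway (pushouts along monos are pullbacks, monos are stable under pushout); your stronger conclusion that this square is also a pushout is harmless but not required.
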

With this lemma we can finally define the rule that corresponds 
to ``parallel composition'' of an action with another ``process''. 
Now the \textsc{sosbc}-system does not only give an analogy to 
the standard \textsc{sos}-semantics for \textsc{ccs}, 
we shall also see that the labels that are derived 
by the standard  \textsc{bc} technique
are exactly those labels 
that can be obtained from the basic actions 
by compatible contextualization and interface narrowing.
In technical terms, 
the \textsc{sosbc}-system of Table~\ref{tab:transitions}
is sound and complete. 

\begin{table}[!ht]
    \centering
    \hbox{\raisebox{0.4em}{\vrule depth 0pt height 0.4pt width 12cm}}
    \begin{itemize}
    \item \textbf{Basic Actions}
      \[
      \genfrac{}{}{}{0}{}{\quad \action{D}{D}{L}{I}{R} 
        \quad
      }
      \qquad \text{ where } 
        \begin{array}[c]{l}
          (L \leftarrow I \rightarrow R) \in \S\\
          \text{ and } D \to L
        \end{array}
      \]
    \item \textbf{Interface Narrowing} 
	\[\genfrac{}{}{}{0}{\theaction}{\quad 
           \action{G}{J'}{F'}{K'}{H} 
          \quad} \qquad
      \text{ where }
      \begin{array}[c]{l}
        C = J \to J \gets J' \\
        \text{ and } J' \to F' \gets K'=        C[J \to F \gets K]
      \end{array}
	\]
    \item \textbf{Compatible Contextualization} 
      \[\genfrac{}{}{}{0}{\theaction}{
        \quad 
        C[J \to G]
        \xrightarrow{C[J \to F \gets K]}
        \bar C[K \to H]
        \quad
      } 
      \qquad \text{ where } 
                \begin{array}[c]{l}
                  C = \context{J}{E}{\bar J} 
		  \text{ compatible with } \redlabel{J}{F}{K}\\
                  \text{ and }
                  \bar C = (J \to F \gets K)[C]
                \end{array}
	\]
    \end{itemize}
    \hbox{\raisebox{0.4em}{\vrule depth 0pt height 0.4pt width 12cm}}
\caption{Axioms and rules of the \textsc{sosbc}-system.}
\label{tab:transitions}
\end{table} 

\begin{theorem}[Soundness and completeness]
  Let $\S$ be a graph transformation system.
  Then there is a \textsc{bc}-transition
  \begin{displaymath}
    \theaction
  \end{displaymath}
  if and only if it is derivable in the \textsc{sosbc}-system. 
\label{th:SOSequalsBC}
\end{theorem}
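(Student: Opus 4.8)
The plan is to prove the two implications separately. Throughout, every morphism in sight is an inclusion, so pushouts of non-overlapping inclusions are unions and pullbacks are intersections; I will use these concrete descriptions, the pasting laws for pushout and pullback squares, the uniqueness of pushout complements (Fact~\ref{fact:poc}), and the (adhesive) stability of pullbacks under pushout along inclusions.

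\emph{Soundness} ($\Leftarrow$). Induction on the derivation, checking that each rule of Table~\ref{tab:transitions} turns \textsc{bc}-transitions into \textsc{bc}-transitions. A \textbf{Basic Action} $\action{D}{D}{L}{I}{R}$ is witnessed by the \textsc{bc}-diagram of Definition~\ref{def:dpobc} in which $G,G_c,F,C,H,K$ are all chosen to be $D,L,L,I,R,I$, with vertical maps induced by $D\to L$ and horizontal maps as in the rule; every square is then an identity pushout or pullback. For \textbf{Interface Narrowing} with $C=J\to J\gets J'$ I keep the entire ``upper'' part $D\to L\gets I\to R$, $G\to G_c\gets C\to H$ of the given diagram and only replace $J,F,K$ by $J',F',K'$: ``$G_c$ is the pushout of $F'\gets J'\to G$'' follows from $G_c=F\cup_J G=(F'\cup_{J'}J)\cup_J G=F'\cup_{J'}G$, and ``$K'$ is the pullback of $F'\to G_c\gets C$'' follows by pullback pasting because $K'$ is simultaneously the pullback of $F'\to F\gets K$. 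For \textbf{Compatible Contextualization} with a compatible context $C=\context{J}{E}{\bar J}$ I glue $E$ along $J$ into every node of the \textsc{bc}-diagram while keeping the top row $D\to L\gets I\to R$ fixed: the state graph becomes $E\cup_J G$, the borrowed graph $G_c\cup_J E$, the pushout complement and result graph absorb the surviving part of $E$, and the bottom row becomes $\bar J\to\bar F\gets\bar K$; that each square is still a pushout or pullback is pasting plus adhesivity, and the identification of the new label and result state with those prescribed by the rule is exactly the content of Lemma~\ref{lem:inhibitRedLab} together with the definition of cospan combination.

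\emph{Completeness} ($\Rightarrow$). Fix a \textsc{bc}-diagram for $\theaction$ with partial match $D$ and rule $(L\gets I\to R)\in\S$. The pivotal observation is the property $(\star)$: every vertex of $D$ that is, in $L$, incident to an edge of $L$ outside $D$ already lies in $J$; indeed the square ``$G_c$ is the pushout of $F\gets J\to G$'' determines $F$ to have edge set $E_J\cup(E_L\setminus E_D)$ and vertex set $V_J\cup(V_L\setminus V_D)$, and $F$ being a well-formed hyper-graph yields $(\star)$. Now assemble the transition in three steps. (1) The \textbf{Basic Action} axiom gives $\action{D}{D}{L}{I}{R}$. (2) Apply \textbf{Compatible Contextualization} with $C_1:=\context{D}{G}{D\cup J}$: it is monotone since $D\subseteq D\cup J$, it is non-inhibiting w.r.t.\ $\redlabel{D}{L}{I}$ because the pushout and pushout complement demanded by \eqref{eq:nonInhib} are instantiated by the $G_c$ and $C$ already present in the given diagram, the pushout complement of $G_c\gets G\gets(D\cup J)$ exists because the only edges of $G_c$ outside $G$ are those of $L\setminus D$, whose $G$-vertices lie in $D$, and by Lemma~\ref{lem:inhibitRedLab} it is combinable with the label; the produced transition has source $D\cup J\to G$, label $D\cup J\to\bar F\gets\bar K$, and result state $\bar K\to H$ with $H$ the pushout of $R\gets I\to C$, i.e.\ the given one. (3) Apply \textbf{Interface Narrowing} with $C_2:=\context{D\cup J}{D\cup J}{J}$: the pushout complement of $\bar F\gets(D\cup J)\gets J$ exists precisely by $(\star)$ (no edge of $\bar F$ outside $D\cup J$ is incident to a vertex of $D\setminus J$). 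A short computation on edge and vertex sets shows the resulting first component is $F$, and by pullback pasting --- as the pullback of $F\to\bar F\to G_c\gets C$ --- the resulting second component is $K$, while the result state is unchanged at $K\to H$. Hence $\theaction$ is derivable.

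\emph{Main obstacle.} The technically heaviest (though routine) point is the \textbf{Compatible Contextualization} case of soundness: reinserting the context graph $E$ coherently into the partial match, the borrowed graph, the pushout complement and the result graph, and verifying that every one of the several squares of the \textsc{bc}-diagram stays a pushout or a pullback; this is where Lemma~\ref{lem:inhibitRedLab} and adhesivity of hyper-graphs are genuinely used, whereas the \textbf{Basic Action} and \textbf{Interface Narrowing} cases and the three assembly steps of completeness are direct pasting/complement arguments.
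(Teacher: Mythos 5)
Your proof is correct and follows what is essentially the paper's own (appendix) argument: soundness by checking rule-by-rule that Basic Actions, Narrowing and Compatible Contextualization preserve the existence of a \textsc{bc}-diagram, and completeness by factoring any \textsc{bc}-diagram with partial match $D$ as a Basic Action on $D$, contextualization by $\context{D}{G}{D\cup J}$, and narrowing $D\cup J$ down to $J$. Your observation $(\star)$, extracted from the pushout square $G_c = F\cup_J G$, is exactly the point that guarantees the narrowing step's pushout complement exists, so the decomposition goes through.
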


The main role of this theorem is not its technical ``backbone'', 
which is similar to many other theorems on the Borrowed Context
technique. 
The main insight to be gained is
the absence of any ``real'' communication between sub-systems;
roughly, every reaction of a state can be ``localized''
and then derived from a basic action
(followed by contextualization and narrowing).
In particular, 
we do not have any counterpart to the communication-rule in
\textsc{ccs}, 
which has complementary actions $P \to[a] P'$ and $Q \to [\bar a] Q'$ 
as premises and concludes the possibility of communication of the processes
$P$ and $Q$ to perform the silent ``internal'' transition $P \parallel Q
\to[\tau] P' \parallel Q'$. 
The main goal is to provide an analysis of possible issues with
a counterpart of this rule. 


%
%
\section{The composition rule for CCS-like systems}
\label{sec:interaction-rule}

Process calculi, such as \textsc{ccs} and the $\pi$-calculus, 
have a so-called \emph{communication rule} that allows to synchronize sub-processes to
perform silent actions.
The involved process terms have complementary actions that allow to
interact 
by a ``hand-shake''.
However, it is an open question how such a communication rule can be
obtained for general graph transformations systems
via the Borrowed Context technique.
Roughly, 
the label of a transition does not contain information about which
reaction rule was used to derive it;
in fact, 
the same label might be derived using different rules. 
Intuitively, 
we do not know how to identify the  two hands that have met to shake hands.

To elaborate on this using the  metaphor of handshakes, 
assume that we have an agent that needs a hand to perform a
handshake 
or to deliver an object. 
If we observe this agent reaching out for another hand, 
we cannot conclude from it which of the two possible actions will follow. 
In general, 
even after the action is performed,
it still is not possible to know the decision of the agent --
without extra information, which might however not be observable. 
\marginpar{???! 
\\Intuitively, 
we first show that if the agent was able to shake a hand, and its
partner too, 
we could have done it, even if it is not what ``we did''.}
However, 
with suitable assumptions about the ``allowed actions'',
all necessary information might be available. 



First, we recall from~\cite{BEK06} 
that  \textsc{dpobc}-diagrams (as defined in Definition~\ref{def:dpobc}) can be composed under certain circumstances. 
\begin{fact}
  Let \[\theaction \quad\text{ and }\quad \action{G'}{J'}{F'}{K'}{H'} \] 
  be two transitions obtained from two \textsc{dpobc}-diagrams
  with the same rule $\rho = L \leftarrow I \rightarrow R$.
  Then, it is possible to build a \textsc{dpobc}-diagram with the same rule for the composition of $J \rightarrow G$ and $J' \rightarrow G'$ along some common interface $J \gets \minint{D}{L} \to J'$%
  . 
\end{fact}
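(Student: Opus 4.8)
The plan is to re-run the given double-pushout step inside the hyper-graph obtained by gluing the two ``match-completed'' graphs along their common copy of the left-hand side $L$. So, write out the two \textsc{dpobc}-diagrams for the common rule $\rho=L\gets I\to R$ and let $G_c,G'_c$ be their upper-middle graphs (the completions of $G$ resp.\ $G'$ by the still-missing part of the match), each carrying an inclusion $L\to G_c$, $L\to G'_c$. Form the pushout $\widetilde G:=G_c+_L G'_c$; since the category of $\Lambda$-labelled hyper-graphs is adhesive this pushout exists with all legs monic, and up to isomorphic rule instances (Remark~\ref{rem:ruleInstance}) we may take $\widetilde G=G_c\cup G'_c$ with $G_c\cap G'_c=L$. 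Inside $\widetilde G$ one has $G\cap G'=D\cap D'$, so putting $G'':=G\cup G'$ and $J'':=J\cup J'$ the state $J''\to G''$ is the composition of $J\to G$ and $J'\to G'$ along the common sub-interface $\minint D L$, which is the overlap $D\cap D'\hookrightarrow L$ (sitting in both $J$ and $J'$ by the very shape of a \textsc{bc}-diagram, since otherwise the minimality square $J\to G\to G_c$ could not be a pushout).

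Next I re-run the rewrite inside $\widetilde G$. The partial match of $J''\to G''$ is $D\cup D'$, and since $\widetilde G=G''\cup L$ its completion $G''_c$ is again $\widetilde G$, with borrowed context the subgraph $F'':=J''\cup(L\setminus(D\cup D'))$. The pushout complement $C''$ of $I\to L\to\widetilde G$ exists: by Fact~\ref{fact:poc} it suffices that no vertex of $V_L\setminus V_I$ be incident to an edge of $\widetilde G$ outside $L$, and any such edge lies in $G_c$ or in $G'_c$, contradicting the existence of the pushout complements $C,C'$ of the two given diagrams. Pasting and decomposing pushouts along $I$ then gives $C''=C+_I C'$; set $H'':=C''+_I R$ (equivalently $H''=H+_R H'$) and let $K''$ be the pullback of $F''\to\widetilde G\gets C''$. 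Assembling these graphs yields a \textsc{dpobc}-diagram for $\rho$ that witnesses a transition out of $J''\to G''$; note that when the two partial matches jointly cover $L$, i.e.\ $D\cup D'=L$, one gets $F''=J''$ and hence a silent transition, which is the configuration relevant to a communication rule.

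What is left is to check that every square of the assembled diagram is a pushout (resp.\ that the interface square is a pullback). This is a diagram chase using only properties available in an adhesive category: composition and decomposition of pushouts along monos, the presentation of such pushouts as unions and of pullbacks as intersections, and the van Kampen property. The one genuinely delicate point is matching the re-run pushout complement $C''$ and the re-run $R$-pushout $H''$ against the two originals, i.e.\ establishing $C''=C+_I C'$ and $H''=H+_R H'$ together with the compatibility of $K''$ with $K$ and $K'$; this is exactly the computation of~\cite{BEK06}, which the proof transports to the present hyper-graph setting.
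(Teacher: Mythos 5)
The paper does not actually prove this statement: it is imported verbatim from~\cite{BEK06} as a black-box ``Fact'' (the surrounding text says only that \textsc{dpobc}-diagrams ``can be composed under certain circumstances''), and the paper offers an illustrating example rather than an argument. Your proposal therefore cannot be compared to a proof in the paper; judged on its own, it is essentially the standard composition construction of~\cite{BEK06} transported to hyper-graphs, and its outline is sound: gluing $G_c+_LG'_c$, checking existence of the pushout complement of $I\to L\to\widetilde G$ via Fact~\ref{fact:poc} (your argument that an offending edge would already contradict the existence of $C$ or $C'$ is correct), and recovering $C''=C+_IC'$, $H''=H+_RH'$ by pushout pasting are all the right moves.

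There is, however, one genuine gap you should name explicitly: your construction composes $J\to G$ and $J'\to G'$ along the overlap $G\cap G'=D\cap D'$ computed inside $\widetilde G$, and you silently identify this overlap with $\minint{D}{L}$. That identification fails for arbitrary partial matches of the same rule: if, say, $D=D'$ is a proper subgraph of $L$, then $D\cap D'=D\supsetneq\minint{D}{L}$, and in general $\minint{D}{L}$ need not even equal $\minint{D'}{L}$, so the ``common interface $J\gets\minint{D}{L}\to J'$'' of the statement is not well-defined without an extra hypothesis. The identification does hold when $D'=\compl{D}{L}$, i.e.\ when $\{D,D'\}\in\mathbb D$ is an active pair, since the initial pushout along monos is also a pullback and the paper notes that $\minint{\compl{D}{L}}{L}=\minint{D}{L}$; this is exactly the hypothesis under which the Fact is invoked later (Proposition~\ref{prop:compositionality} and the Compositionality theorem), and it is presumably among the ``certain circumstances'' the paper alludes to. So either state that additional hypothesis up front, or acknowledge that what you construct is a composition along $D\cap D'$ rather than along $\minint{D}{L}$. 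The concluding deferral of the pushout/pullback verifications to~\cite{BEK06} is honest but means your argument remains a sketch, which is consistent with the statement's status as a recalled fact.
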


Take the following example as illustration of this fact. 

\begin{example}[Composition of transitions] 
  Let $J \to G$ be a state of $\S_{ex}$ 
  that contains an edge $\alpha$ with its second connection in the interface
  as shown in Figure~\ref{subfig:st}.
  Further, let $J' \to G'$ be a state 
  that contains  an edge $\beta$ with its second connection in the interface
  as shown in Figure~\ref{subfig:nd}.
  Both graphs can trigger a reaction from rule $\alpha/\beta/\gamma$. 
  Such a composition is shown in Figure~\ref{subfig:rd}.
\end{example}

 \begin{figure}[htb]
 \centering
  \subfigure[A first transition]{
    \label{subfig:st}
      \scalebox{.45}{\input{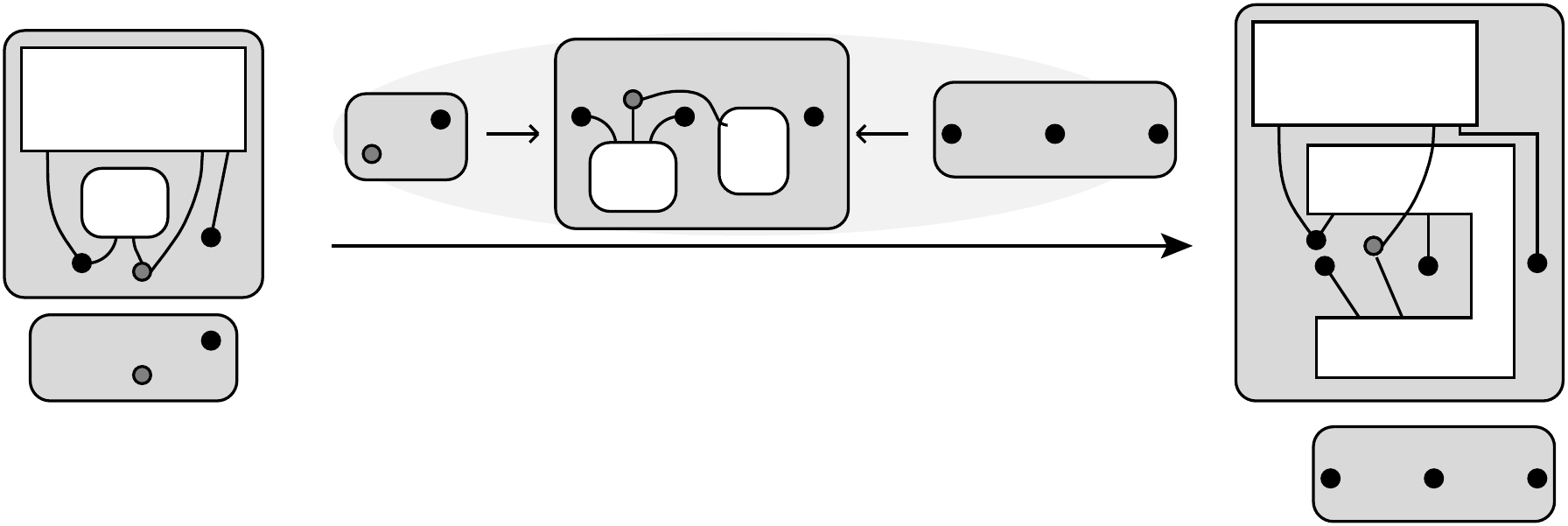_t}}
  }
  
  \subfigure[A second transition]{
        \label{subfig:nd}
      \scalebox{.45}{\input{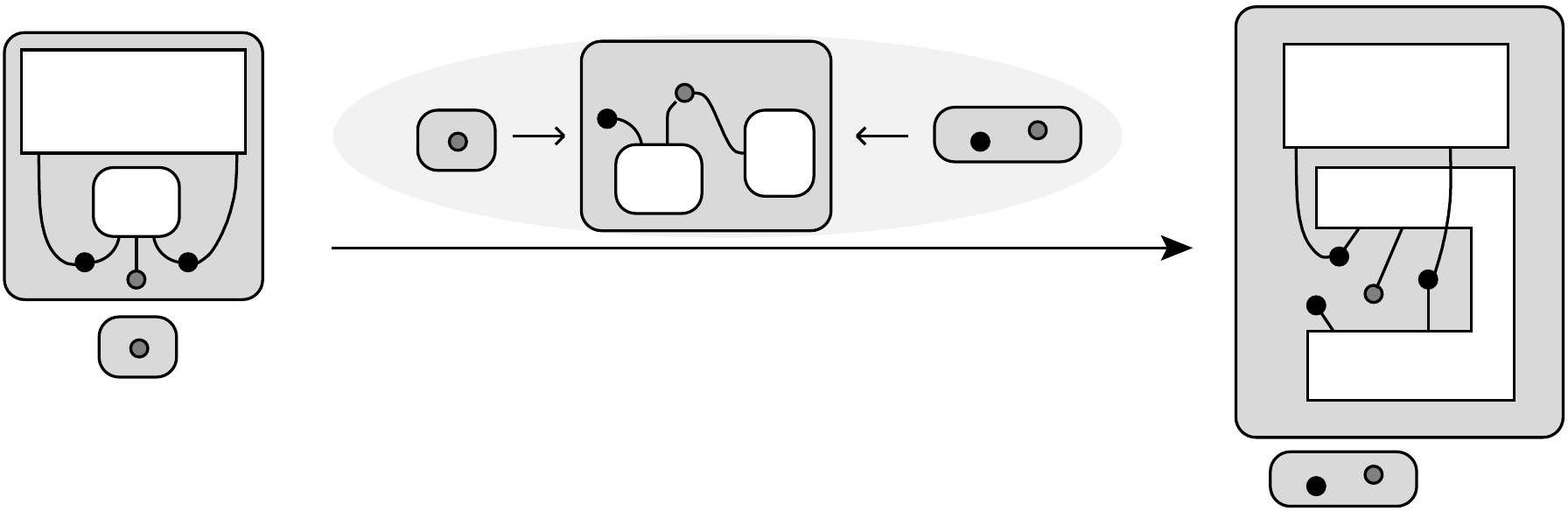_t}}
  }
  
  \subfigure[The composition of the transitions]{
    \label{subfig:rd}
      \scalebox{.45}{\input{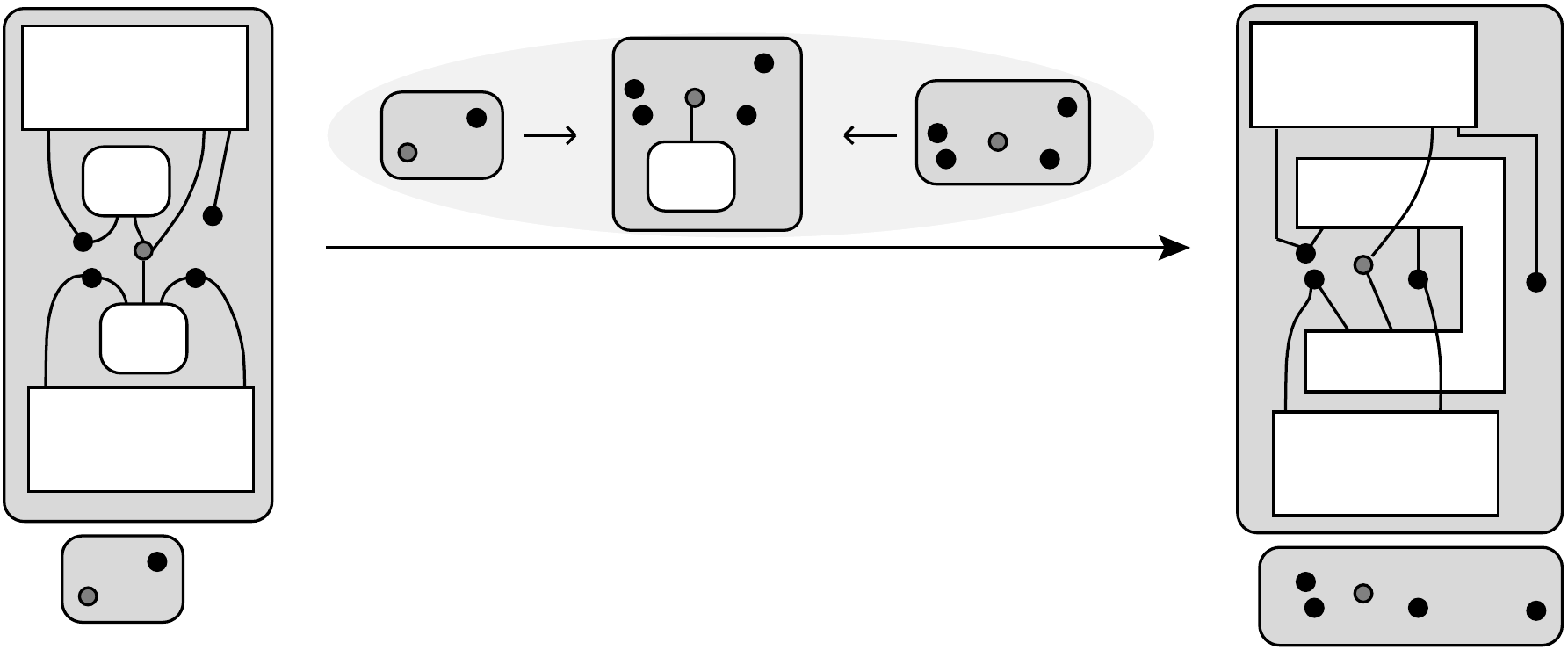_t}}
  }
  \caption{An example of composition.}
  \label{fig:compo}
 \end{figure}
Hence, 
we see that is in general possible to combine  transitions to obtain new transitions. 
However, 
we emphasize at this point, 
that derivability of a counterpart of the communication rule of \textsc{ccs}
is not the same question as the composition 
of pairs of transitions that come equipped with \emph{complete}~\textsc{bc}-diagrams. 
To clarify the problem, 
consider the following example where we cannot  infer the used rule from the 
transition label.

\begin{example}
  Let $G$ be a graph composed of two edges $\alpha$ and $\beta$ and consider a transition label where an edge $\gamma$ is ``added''. 
  Then it is justified by both rules $\alpha/\gamma$ and $\beta/\gamma$ (see Figure \ref{fig:twoPossibilities}). 
\end{example}

 \begin{figure}[htbp]
 \centering
  \subfigure[A transition from rule $\alpha/\gamma$]{
      \scalebox{.45}{\input{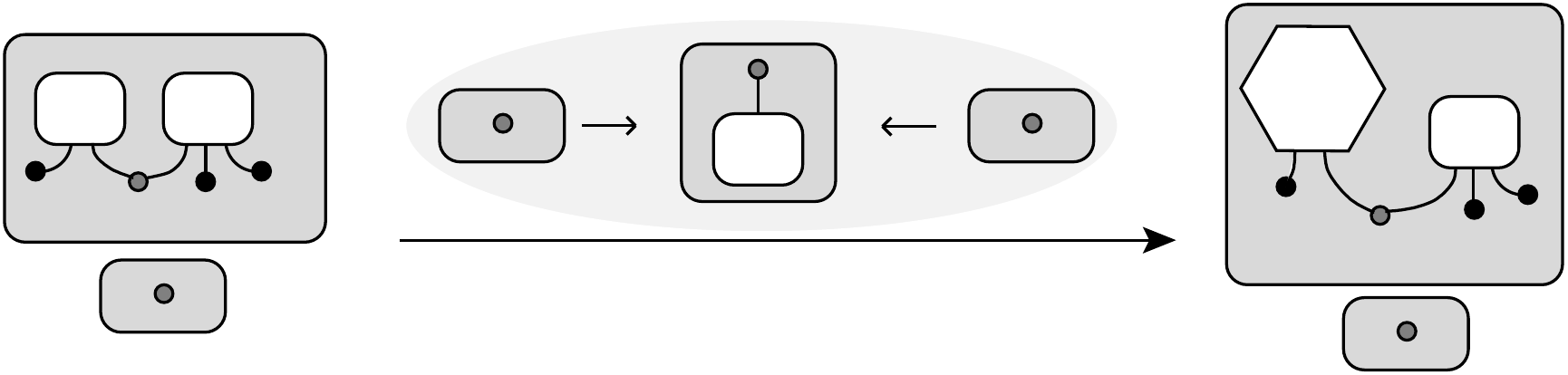_t}}
  }
  
  \subfigure[A transition from rule $\beta/\gamma$]{
      \scalebox{.45}{\input{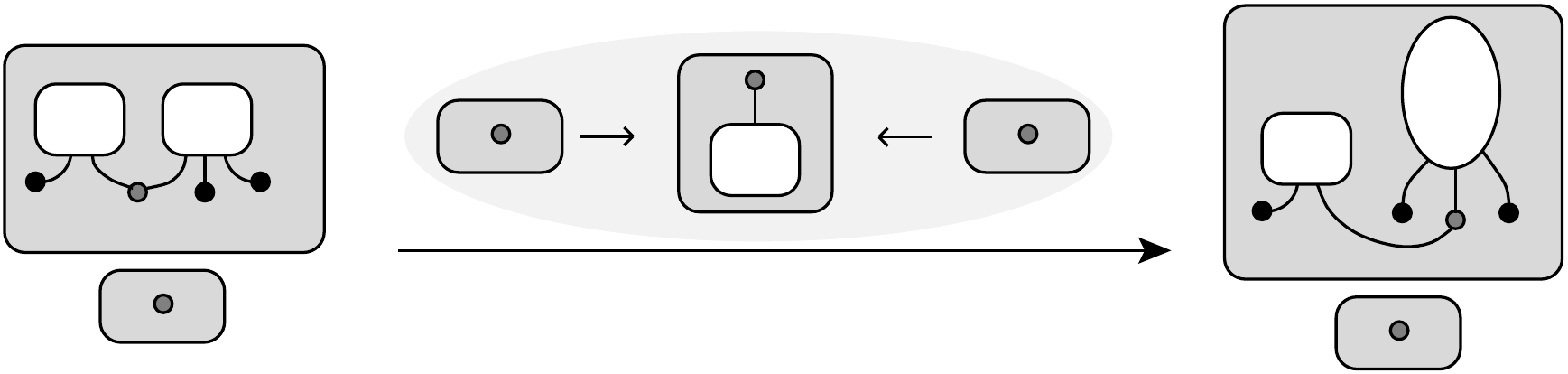_t}}
  }
  \caption{Same transition label for different rules.}
  \label{fig:twoPossibilities}
 \end{figure}
We shall avoid this problem by restricting to suitable classes of graph transformation systems. 
Moreover, 
for simplicities sake, 
we shall focus on the derivation of ``silent'' transitions
in the spirit of the communication rule of \textsc{ccs}. 
\begin{definition}[Silent label]
  A label $J \to F \gets K$ is \emph{silent} or $\tau$
  if $J = F = K$;
  a \emph{silent transition} is a transition with a silent label. 
\end{definition}
  Intuitively, 
  a silent transition is one that does not induce any ``material''
  change that is visible to an external observer
  that only has access to  the interface of the states. 
  Hence, 
  in particular, 
  a silent transition does not involve additions of the environment
  during the transition.
  Moreover, 
  the interface remains unchanged. 
  This latter requirement does not have any counterpart 
  in process calculi, 
  as the interface is given implicitly by the set of all free names.
  (In graphical encodings of process terms~\cite{bonchi2009labelled}
  it is possible to have free names in the interface
  even though there is no corresponding input or output prefix in
  the term.)


  Now, with the focus on silent transitions,
  for a given rule
  $\redrule{L}{I}{R}$
  we can illustrate the idea of complementary actions as follows.
  If a graph $G$ contains a
  subgraph $D$ of $L$  and moreover a graph $G'$ has the complementary
  subgraph of $D$ in $L$ in it,
  then $G$ and $G'$ can be combined to obtain a
  big graph $\bar G$ -- the ``parallel composition'' of $G$ and $G'$ --
  that has the whole left hand side $L$ as a subgraph
  and thus $\bar G$ can perform the reaction.
  {A natural example for this 
    are Lafont's interaction nets
    where the left hand side consist exactly of two hyper-edges, 
    which in this case are called cells.}
  \marginpar{ so we will only consider complementary subgraphs with
    common interface a single ``wire'' (a single element of dimension
    $0$). But this can be generalized with some technicalities to any
    kind of ``minimal interface''.}
  The intuitive idea of complementary (basic) actions is 
  captured by the notion of \emph{active pairs}.

\begin{definition}[Active pairs] 
  For any inclusion $D \to L$, where $D \neq L$ and for all nodes $v$ of $D$, $\deg(v) > 0$, 
  let the following square be its initial pushout  
  \[ \begin{tikzpicture}[scale=1,baseline={(current bounding box.west)},semithick]

  \mnode[J]{\minint{D}{L}}{0,1}
  \mnode{D}{0,0}
  \mnode[D']{\compl{D}{L}}{1,1}
  \mnode{L}{1,0}

  \foreach \u/\v in {J/D,J/D',D/L,D'/L}
  {\ardrawd{\u}{\v}};

  \POC{L}{135}
\end{tikzpicture}, 
\]
i.e.\ $\compl{D}{L}$ is the smallest subgraph of $L$  that allows for completion to a pushout. 
We call $\compl{D}{L}$ the \emph{complement of $D$ in $L$} and $\minint{D}{L}$ the \emph{minimal interface of $D$ in $L$}
and we write $\{ D,D' \} \equiv L$ if $D' = \compl{D}{L}$. 
 The set of \emph{active pairs} is 
 \begin{displaymath}
   \mathbb D = \big\{\ \{D,\compl{D}{L}\}\ \mid  L \gets I \to R \in \R,D \to L,~ D \neq L,~ \forall  v \in D\ldotp \deg(v) > 0 
 \big\}.
 \end{displaymath}
  Abusing notation, 
  we also denote by $\mathbb D$ the union of $\mathbb D$.  
\label{def:activepair}
\end{definition}
It is easy to verify that the complement of $\compl{D}{L}$ in $L$ is $D$ itself and that its minimal interface is also $\minint{D}{L}$.
It is the set of ``acceptable'' partial matches in the sense that they do not yield a $\tau$-reaction on their own.
Indeed, if $D$ is equal to $L$, 
then the resulting transition of this partial match is a $\tau$-transition.
And if it is just composed of vertices, its complement is $L$
and thus not acceptable.

\begin{example}[Active pairs]
  In our running example, 
  the set $\mathbb D$ of our example is in obvious bijection to
    \[\big\{ \{\alpha,\beta\},\{\alpha,\gamma\},\{\beta,\gamma\},\{\alpha,\beta+\gamma\},\{\alpha+\beta,\gamma\},\{\alpha+\gamma,\beta\}  \big\}.\] 
  The minimal interface of any pair is a single vertex.
\end{example}
This completes the introduction of preliminary concepts to tackle the issues that 
have to be resolved to obtain ``proper'' compositionality of transitions. 

\subsection{Towards a partial solution}
Let us address the problem of identifying the rule that is ``responsible'' for a given interaction. 
We start by considering the left inclusions of labels, 
which intuitively describe possible borrowing actions from the environment.
Relative to this,  we define the \emph{admissible rules} as those rules that
can be used to let states evolve while borrowing the specified ``extra material'' from the environment.

\begin{definition}[Admissible rule]
  \label{def:admissibility}
  Let $J\to G$ be a state
  and let $J\to F$ be an inclusion (which represents a possible contribution of the context). 
  A rule $\rho$ is \emph{admissible  (for $J\to F$)}  
  if $L \not \to G$ and it is possible to find $D \in \mathbb D$ and $L$ the left-hand side of $\rho$, such that the following diagram commutes 
\begin{displaymath}
  \begin{tikzpicture}[scale=1.0,baseline={(current bounding box.west)},semithick]

  \mnode[star]{J^{^{_ L}}_{^D}}{0,0}
  \mnode{G}{0,2}
  \mnode{J}{0,1}
  \mnode{F}{1,1}
  \mnode[Gc]{G^c}{1,2}
  \mnode{D}{1,0}
  \mnode{L}{1,3}

  \foreach \u/\v in {star/J,star/D,J/G,J/F,D/F,L/Gc,G/Gc,F/Gc}
  {\ardrawd{\u}{\v}};

  \draw [->] (1.2,0) arc (-90:0:.8) -- (2,2.2) arc (0:90:.8) ;
  \draw [->] (L) -- (G) node[midway] {$\backslash$} ;
  
  \POC{Gc}{225}

\end{tikzpicture}
\end{displaymath}
where $J^{^{_ L}}_{^D} \to D$ is the minimal interface of $D$ in $L$. 
We call $D$ the \emph{rule addition}.

\end{definition}
This just means that $G$ \emph{can} evolve using the rule $\rho$ if  $D$ is added at the proper location. 

\begin{proposition}[Precompositionality]
   Let $\theactionLight$ and $\actionLight{G'}{J'}{F'}{K'}{H'}$ be two transitions such that a single rule $\rho$ is admissible  for both, and let $D$ and $D'$ be their respective rule additions.
  If $\{D,D'\} \in \mathbb D$, 
  it is possible to compose $G$ and $G'$ into a graph $\bar G$ in a way to be able to derive a
  $\tau$-transition using rule $\rho$.
\label{prop:compositionality}
\end{proposition}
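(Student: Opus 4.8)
The plan is to make the handshake metaphor literal. By the meaning of the rule addition (Definition~\ref{def:admissibility}), $D$ is the portion of $L$ that the state $J\to G$ is lacking and would borrow from the environment, so $G$ already contains the complement $\compl{D}{L}$ attached along $\minint{D}{L}\subseteq J$; symmetrically, since $D'=\compl{D}{L}$ and, by the remarks after Definition~\ref{def:activepair}, the complement of $\compl{D}{L}$ in $L$ is again $D$ with the same minimal interface, the state $J'\to G'$ already contains $D$ attached along $\minint{D}{L}\subseteq J'$. Gluing $G$ and $G'$ along $\minint{D}{L}$ then produces a graph containing $\compl{D}{L}\cup_{\minint{D}{L}}D=L$, so $\rho$ can fire with nothing borrowed, i.e.\ silently.

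First I would unpack Definition~\ref{def:admissibility} to extract these inclusions. Its diagram for $J\to G$ gives the pushout $G^c=G\cup_J F$ together with an inclusion $L\hookrightarrow G^c$ whose restriction to the rule addition factors through $F$; since $G^c$ is the union of $G$ and $F$ meeting exactly in $J$, every element of $L$ outside the rule addition lies in $G$, and together with $\minint{D}{L}\to J\to G$ this yields an inclusion $\compl{D}{L}\hookrightarrow G$ agreeing with $\minint{D}{L}\to J$ on $\minint{D}{L}$. Running the same argument for $J'\to G'$ gives an inclusion $D\hookrightarrow G'$ agreeing with $\minint{D}{L}\to J'$. Passing to isomorphic instances of $\rho,G,G'$ in the style of Remark~\ref{rem:ruleInstance} so that these inclusions are mutually non-overlapping, I then let $\bar G$ be the pushout of $G\gets\minint{D}{L}\to G'$ and equip it with the interface $\bar J$, the largest sub-hypergraph of $J\cup_{\minint{D}{L}}J'$ not meeting the part $L\setminus I$ deleted by $\rho$.

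By the initial-pushout square of Definition~\ref{def:activepair}, $L$ is the pushout of $D\gets\minint{D}{L}\to\compl{D}{L}$, so the compatible inclusions $\compl{D}{L}\hookrightarrow G\hookrightarrow\bar G$ and $D\hookrightarrow G'\hookrightarrow\bar G$ induce an inclusion $L\hookrightarrow\bar G$, the non-overlapping choice making it injective. Next I would check that the pushout complement of $\bar G\gets L\gets I$ exists: a deleted vertex $v\in V_L\setminus V_I$ picks up, in $\bar G$, only its incident edges from $G$ and from $G'$, and since admissibility means $\rho$ actually fires in $G^c\supseteq G$ and in the analogous pushout for $J'\to G'$, all of these edges already lie in $L$; so the dangling condition holds and Fact~\ref{fact:poc} provides the pushout complement $\bar C$, with $\bar G\gets\bar C\to\bar H:=\bar C\cup_I R$. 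Because $L$ is already inside $\bar G$, nothing needs to be borrowed: the partial match is all of $L$, the two borrowing squares degenerate with $F=\bar J$, and (by the choice of $\bar J$, which lands in $\bar C$) the pullback square gives $K=\bar J$ as well. This assembles a \textsc{dpobc}-diagram for $\rho$ over $\bar J\to\bar G$ with silent label, i.e.\ the desired transition $(\bar J\to\bar G)\xrightarrow{\tau}(\bar J\to\bar H)$.

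The conceptual content is light; the real work---and the main obstacle---is the first step, extracting honest inclusions $\compl{D}{L}\hookrightarrow G$ and $D\hookrightarrow G'$ from the admissibility data and checking that they still agree on $\minint{D}{L}$ after passing to non-overlapping instances, together with the routine but fiddly verification that the assembled \textsc{dpobc}-diagram has the required pushouts and pullbacks and that $\bar J$ survives the rewrite. If, as for interaction nets, the system is ``\textsc{ccs}-like'' in the stronger sense that $\minint{D}{L}\subseteq I$ for every active pair, then no element of $\minint{D}{L}$ is deleted and one may simply take $\bar J=J\cup_{\minint{D}{L}}J'$, so that $\bar J\to\bar G$ is literally the parallel composition of the two states; alternatively one can reach the same conclusion by first promoting each transition to an actual $\rho$-transition that borrows its rule addition and then using the composability of \textsc{dpobc}-diagrams with a common rule recalled above.
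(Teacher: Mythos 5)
Your argument is correct in substance and shares the paper's crucial opening move --- extracting from admissibility the inclusions $\compl{D}{L}\hookrightarrow G$ and $D\hookrightarrow G'$ agreeing on $\minint{D}{L}$, and gluing the two states along that minimal interface --- but it then takes a more elementary route to the conclusion. The paper does not build the silent \textsc{dpobc}-diagram from scratch: it first re-derives each of the two given labels as a transition of $G$ (resp.\ $G'$) \emph{using the rule $\rho$ itself}, obtaining two complete \textsc{dpobc}-diagrams $\mathtt D_1,\mathtt D_2$ with the same rule, and then invokes the composition result of Baldan--Ehrig--K\"onig (the Fact recalled at the start of Section~\ref{sec:interaction-rule}) to compose them along $J\gets\minint{D}{L}\to J'$; silence of the composite is then read off from the fact that the combined borrowed part $\bar E$ equals $L$, so the top-left morphism of the composed diagram is an isomorphism and the squares below it degenerate. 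Your direct construction (pushout $\bar G=G\cup_{\minint{D}{L}}G'$, inclusion $L\hookrightarrow\bar G$ via the initial-pushout decomposition of $L$, dangling condition, degenerate borrowing squares) proves the same thing without appealing to the composition fact, at the price of redoing by hand the bookkeeping that the fact packages; you even name the paper's route as your ``alternative'' in the last sentence. One caveat applies equally to both arguments: the step ``every element of $L$ outside the rule addition lies in $G$'' is exactly the paper's unproved claim $\compl{D}{L}\to G$, and as Definition~\ref{def:admissibility} literally stands it only places $L$ inside $G^c=G\cup_J F$, not $L\cap F$ inside $D\cup J$; your justification via ``$G^c$ is the union of $G$ and $F$ meeting exactly in $J$'' is no more (and no less) complete than the paper's, so this is not a gap relative to the paper's own level of detail, but it is the point you correctly single out as where the real work lies.
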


\begin{proof}
  We first show that in such a case, $D' \rightarrow G$ and the pushout of $\redrule{G}{D'}{L}$ is exactly $G^c$. Similarly, $D \rightarrow G'$ and the pushout of $\redrule{G'}{D}{L}$ is exactly $G'^c$.
  Then, it is easy to see that it is possible to build the
  \textsc{dpobc}-diagram $\mathtt D_1$  using rule $\rho$  on $G$
  (respectively $G'$) yelding the transition
  $\action{G}{J}{F}{K_1}{H_1}$ for some $K_1,H_1$ (respectively the
  \textsc{dpobc}-diagram $\mathtt D_2$ yelding the transition
  $\action{G}{J}{F}{K_2}{H_2}$ for some $K_2,H_2$), and then compose
  $\mathtt D_1$ and $\mathtt D_2$.

This follows from  $\{D,D'\} \in \mathbb D$ and $\bar G \equiv \bar {G^c}$.
Indeed, $\bar E = L$ so the top left morphism of the composed \textsc{dpobc}-diagram is an isomorphism
and so are the ones under it, using basic pushout properties.
\end{proof}

This first result  motivates the following definition. 
\begin{definition}[$\tau$-compatible]
  In the situation of Proposition~\ref{prop:compositionality}, 
  we say the two transitions  are $\tau$-\emph{compatible}. 
\end{definition}

\begin{remark}
  In general, in Proposition~\ref{prop:compositionality}, 
  the result of the $\tau$-transition cannot be constructed from $H$ and $H'$;
  thus we do not yet speak of compositionality.
\end{remark}

\begin{example}
  Let $G$ be a graph composed of two edges $\alpha$ and $\gamma$ and $G'$ of two edges $\beta$ and $\gamma$  (see Figure~\ref{fig:tauComp}).
  Then the rule $\alpha/\beta$ is admissible for both transitions
  and moreover they are $\tau$-compatible. 
  The rule $\alpha/\beta$ yields the respective rule additions.
  ``Glueing'' $G$ and $G'$ by their interface results in a graph with edges $\alpha, \beta$ and two $\gamma$s;
  the latter graph  can perform a $\tau$-reaction  from rule $\alpha/\beta$,
  which however does not give the desired result
  since the target state is not the ``expected  composition'' of $H$ and $H'$. 
  In other words, 
  although we have been able to  construct a $\tau$-transition,
  it is not the composition of the original transitions.
\end{example}

 \begin{figure}[htbp]
 \centering
  \subfigure[A transition from rule $\beta/\gamma$]{
      \scalebox{.45}{\input{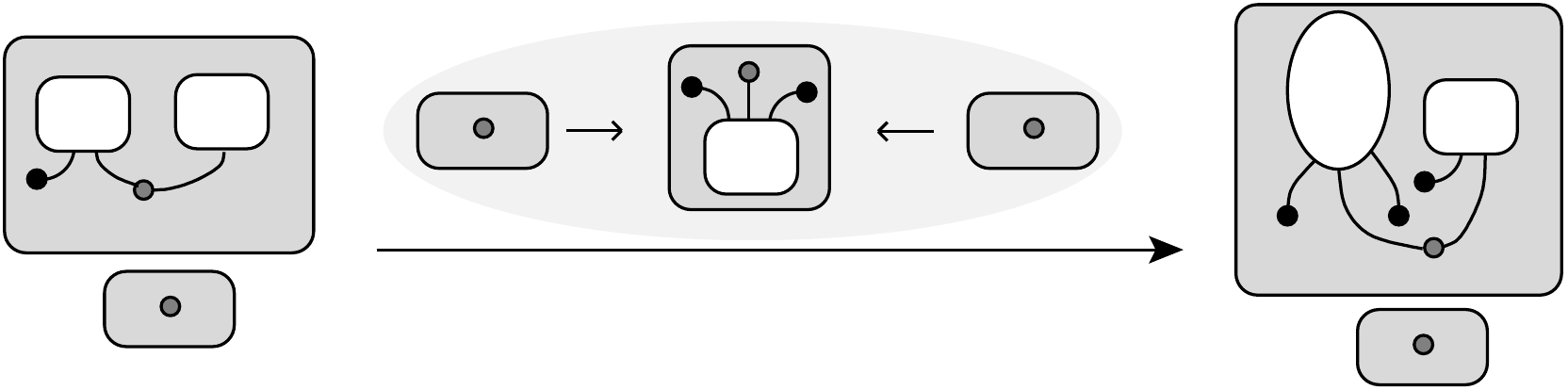_t}}
  }
  
  \subfigure[A transition from rule $\alpha/\gamma$]{
      \scalebox{.45}{\input{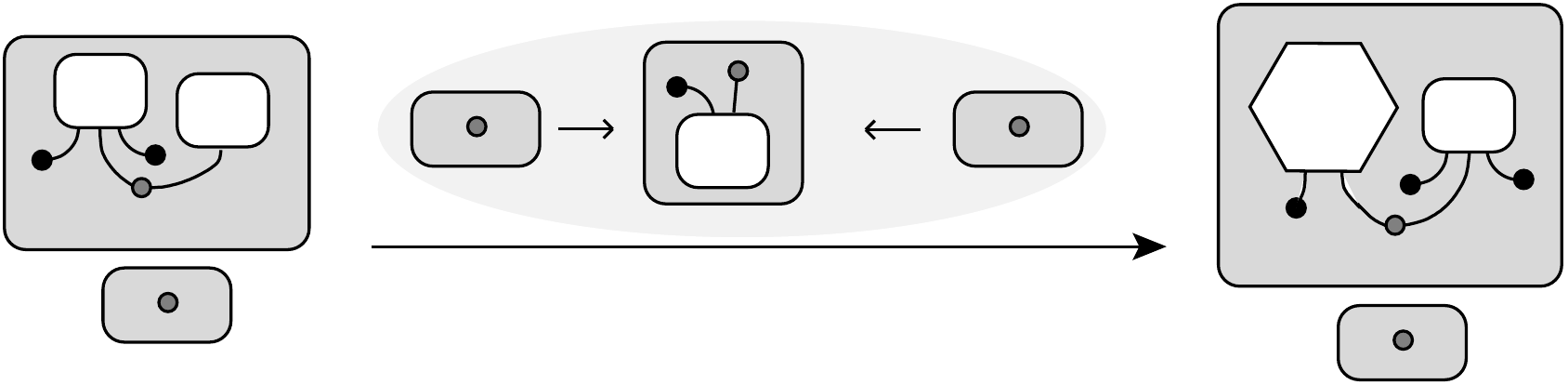_t}}
  }
  \caption{$\tau$-compatible, but not composable: different rules.}
  \label{fig:tauComp}
 \end{figure}

We can see from the examples here that the difficulty of defining a composition of transitions comes mainly from three facts. 
The first is that a partial match can have several subgraphs triggering a reaction. 
This is delt with by the construction of the set of active pairs.
The second one is the possibility to connect multiple edges together, not knowing which one exactly is consumed in the reaction. 
Finally,  a given edge can have multiples ways of triggering a reaction.

\subsection{Sufficient conditions}
We now give two frameworks in which neither of the two last problems do occur. 
Avoiding each of them separately is enough to define compositionality properly.
Both cases  are inspired by the study of interaction net systems~\cite{Laf95,EhrReg06,MazzaPhd06},
which can be represented in the 
{}obvious{}    
manner as graph transformation systems. 
In these systems, 
the \textsc{dpobc}-diagram built from an admissible rule of a transition is necessarily the one that has to be used to derive the transition. 
In one case, 
it works for essentially the same reasons as in \textsc{ccs}: 
every active element can only interact with a unique other element, 
such as $a$ vs. $\bar a$, $b$ vs. $\bar b$.
In the other one, 
the label itself is not enough, 
but since we also know where it ``connects'' to the graph, 
it is possible to ``find'' the partner that was involved in the transition.

We introduce interaction graph systems, 
which are caracterized among other rewriting systems by the form of the left-hand sides of the reaction rules, 
composed of exactly two hyperedges connected by a single node.
We fix a labeling alphabet $\Lambda$.
\begin{definition}
  An \emph{activated pair} is a hypergraph $L$ on $\Lambda$ composed of two hyperedges $e$ and $f$ and a node  $v$ such that $v$ appears exactly once in $\cnct(e)$ and once in $\cnct(f)$.
  If $v$ is the $i$-th incident vertex of $e$ labelled $\alpha$ and the $j$-th incident vertex of $f$ labelled $\beta$, we denote the activated pair by $\pair{e_i}{f_j}$ and label it by $\pair{\alpha_i}{\beta_j}$.

  An \emph{interaction graph system} $(\Lambda,\R)$ is given by a set of reaction rules $\R$ over hypergraphs on $\Lambda$ 
  where all left-hand side of rules are activated pairs, 
  and  nodes are never deleted, 
  i.e.\ for any rule $\rho = \redrule{L}{I}{R}$, 
  \begin{itemize}
    \item $L$ is an activated pair;
    \item for any node $v$, $v \in L \Rightarrow v \in I$.
  \end{itemize}
  \label{def:interSys}
\end{definition}
Note that for any interaction graph system, 
the set $\mathbb D$ is composed of pairs $\{D,D'\}$ where each of them is composed of an edge and its connected vertices.
Also the minimal interface of any active pair $\{D,D'\}$ is a single node.
It is also the case that it is enough for interfaces to be composed of vertices only.


\begin{example}
\centerparagraph{Simply wired hypergraphs}
  Lafont interaction nets are historically the first interaction nets. 
  They appear as an abstraction of linear logic proof-nets~\cite{Laf95}.
  Originally, 
  Lafont nets have several particular features, but the one we are interested in is the condition on connectivity.
\begin{definition}
  Let $N = \hg$ be a hypergraph on $\Lambda$.

  The graph $N$ is \emph{simply wired} if $\forall v \in V$, $\deg(v) \leq 2$.
  When $\deg(v) = 1$, 
  we say that $v$ is \emph{free}.
\end{definition}

In other words, vertices are only incident to  at most two edges of a graph.
Note that in this special case 
no issues arise if we restrict to the sub-category of simply wired hypergraphs.
For this, we argue that the purpose of the interface is the possible addition of extra context; 
thus, in simply wired hypergraphs, 
it is meaningless for a vertex that is already connected to two edges to be in the interface.

\begin{definition}[Lafont interaction graph system]
  A \emph{Lafont interaction graph} is a simply connected graph such that its interface consists  of free vertices only.
  A \emph{Lafont system} $\mathbb L = (\Lambda,\R)$ is given by  reaction rules over Lafont interaction graphs;
  it is \emph{partitioned} if two left-hand sides only overlap trivially, 
  i.e.\ for two rules $\rho_j = L_j \leftarrow I_j \rightarrow R_j \in \R$ ($j=1,2$), 
  either $L_1 = L_2$ or $L_1 \cap L_2$ is the empty graph (without any nodes and any hyperedges). 
\end{definition}

\begin{lemma}
  Let $\mathbb L$ be a partitioned Lafont system, 
  let $J \rightarrow G$ be a state, let $\theaction$ be a non-$\tau$ transition.
  Then there is exactly one  admissible rule for this transition.
\label{lem:therule1}
\end{lemma}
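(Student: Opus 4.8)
I would take any \textsc{dpobc}-diagram realising the transition $\theaction$; it is built from some rule $\rho = \redrule{L}{I}{R}$ and a partial match $P$ (the object ``$D$'' in the diagram of Definition~\ref{def:dpobc}), with $P\hookrightarrow L$ and $P\hookrightarrow G$. The claim is that $\rho$ is admissible for the transition, the natural choice of rule addition being the complement $\compl{P}{L}$ (in the degenerate case where all of $L$ is borrowed, one takes a smaller element of $\mathbb D$ instead). Since the transition is non-$\tau$ we have $L\not\to G$; and since $L$ is an activated pair, a short analysis of the subgraphs of $L$ shows one can pick an element $D\in\mathbb D$ with $D\hookrightarrow L$, with $D\hookrightarrow F$ (this inclusion being part of the borrowed-context construction) and with $\minint{D}{L}\hookrightarrow J$, after which the square demanded by Definition~\ref{def:admissibility} commutes by the defining pushouts of the \textsc{bc}-diagram, with $G^c$ the pushout of $G\gets J\to F$.

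\textbf{Uniqueness.} The heart of the matter is that, in a Lafont system, the left-hand side of an admissible rule is already recoverable from the pair $(J\to G,\ J\to F)$ alone. Let $\rho = \redrule{L}{I}{R}$ be admissible with rule addition $D\in\mathbb D$, and let $G^c$ be the pushout of $G\gets J\to F$. Because $L$ is an activated pair, $D$ consists of a single hyperedge $g$ together with its incident vertices, $\compl{D}{L}$ consists of the other hyperedge $h$ together with its incident vertices, and the two meet exactly in the shared vertex, so $\minint{D}{L}$ is a single vertex $v$. The interface $J$ being discrete, $G^c$ has hyperedge set $E_G\sqcup E_F$, and each hyperedge of $E_G$ keeps all its incident vertices inside $G$. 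Arguing as in the opening of the proof of Proposition~\ref{prop:compositionality} — where simple wiring makes the match of the complement rigid — one obtains that $g$ is the ``borrowed'' hyperedge (so $\compl{D}{L}\hookrightarrow G$ and $G^c$ is the pushout of $G\gets\{v\}\to L$), hence that $F\setminus J$ is precisely the hyperedge $g$ attached to $J$ at the image $\bar v$ of $v$; thus $g$ and $\bar v$ are read off from $J\to F$, while $h$ is a hyperedge of $G$ incident to $\bar v$. Now the other Lafont condition finishes the reconstruction: $\bar v\in J$ is free, so $\deg_G(\bar v)\le 1$; as $h$ is incident to $\bar v$ in $G$ this forces $\deg_G(\bar v)=1$ and makes $h$ \emph{the} unique hyperedge of $G$ incident to $\bar v$. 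So $g$, $h$ and $L=D\cup_{\{v\}}\compl{D}{L}$ are all determined by the transition.

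\textbf{Conclusion and main obstacle.} Combining the two halves: if $\rho_1,\rho_2$ are both admissible for the transition, then inside $G^c$ their left-hand sides contain the same hyperedges $g$ and $h$, so $L_1\cap L_2\neq\emptyset$; since $\mathbb L$ is partitioned this yields $L_1 = L_2$, hence $\rho_1 = \rho_2$. The step I expect to require the actual work — and the only point where both Lafont hypotheses are used together — is the claim that admissibility forces $\compl{D}{L}\hookrightarrow G$ and forces $F\setminus J$ to be a single hyperedge attached at a single interface vertex. This amounts to excluding the ``irregular'' possibilities (a hyperedge borrowed at two interface vertices, or both hyperedges of $L$ borrowed at once), for which a naive reconstruction of $L$ could branch; simple wiring, via uniqueness of pushout complements (Fact~\ref{fact:poc}), rigidifies the match of the complement into $G$, and the freeness of interface vertices then pins down the partner hyperedge $h$, after which the partitioned condition concludes.
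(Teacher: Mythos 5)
Your overall strategy---read existence off the \textsc{bc}-diagram realising the transition, show that the pair $(J\to G,\ J\to F)$ already determines the left-hand side (one hyperedge recovered from $F\setminus J$, its partner as the unique edge of $G$ at the attaching interface vertex), and close with the partitioned condition---is the natural one for this lemma. The problem is that the step you yourself single out as ``the heart of the matter'' is left open, and the claim that simple wiring disposes of it is not substantiated and, as stated, not true. Consider a transition whose \textsc{bc}-diagram has a partial match consisting of vertices only, so that \emph{both} hyperedges of the activated pair $L=\pair{e_i}{f_j}$ are borrowed and their shared vertex is created fresh in $F$. Simple wiring does not exclude this configuration: each interface vertex of $G$ has degree $1$ and acquires at most one new incident edge, and the fresh shared vertex has degree $2$, so every degree bound in $G_c$ is respected, and the pushout complement defining $F$ exists because the borrowed edges attach to $G$ only through $J$. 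In that situation $F\setminus J$ is not a single hyperedge, the image of the shared vertex does not lie in $J$, and hence no rule addition $D\in\mathbb D$ taken from the rule actually used satisfies $\minint{D}{L}\to J$: your reconstruction of $L$ branches, and the existence half of your argument breaks down as well. A complete proof must either show that such transitions cannot arise in a partitioned Lafont system or handle them by a separate argument; appealing to uniqueness of pushout complements (Fact~\ref{fact:poc}) does not do this work, since that fact only gives uniqueness of $F$ once the configuration is fixed, not the shape of the configuration itself.

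A second, smaller gap: from $L_1=L_2$ you conclude $\rho_1=\rho_2$, but the partitioned condition constrains only the left-hand sides, not the assignment of rules to left-hand sides, and admissibility in Definition~\ref{def:admissibility} depends only on the left-hand side. You therefore need, and should state, that distinct rules of $\mathbb L$ have distinct left-hand sides---implicit in the interaction-net tradition, but not in the definitions as given.
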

\end{example}

\begin{example}
  \centerparagraph{Hypergraphs with unique partners}
  By generalizing Lafont interaction nets, we obtain so called \emph{multiwired} interaction nets. 
  But then we lose the unicity of the rule for a given transition label.
  It can be recovered by another condition.

  \begin{definition}[Unique partners]
    Let $\mathbb I = (\Lambda,\R)$ be an interaction graph system.
    We say it is \emph{with unique partners} if for any $\alpha \in \Lambda$ and for all $i \leq \ar(\alpha)$, 
    there exists a unique $\beta \in \Lambda$ and a unique $j \leq \ar(\beta)$ such that $\pair{\alpha_i}{\beta_j}$ is the label of a left-hand side of a rule in $\R$.
  \end{definition}

  \begin{lemma}
    Let $J \rightarrow G$ a state of\/ $\mathbb I$ and $\theaction$ a non-$\tau$ reaction label.
    Then there is exactly one  admissible rule $\rho$ for this transition.
    \label{lem:therule2}
  \end{lemma}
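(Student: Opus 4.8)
The plan is to recover, from the label $J\to F\gets K$ alone, the rule that must have fired, exploiting that in an interaction graph system every member of $\mathbb{D}$ is a single hyperedge together with its incident vertices. Existence is the easy half: from a \textsc{dpobc}-diagram realising the transition, with rule $\rho=\redrule{L}{I}{R}$ and partial match $D_0\to L$, $D_0\to G$, one checks that $\rho$ is admissible. Being non-$\tau$, the borrowing $J\to F$ is non-trivial --- here using that nodes are never deleted, so that $F\cong J$ would force $K\cong J$ and hence a silent transition --- which excludes $D_0=L$; since $L$ is an activated pair this forces $D_0$ to be exactly one of its two hyperedges with its incident vertices, whence $\compl{D_0}{L}$ is the other one, $\{D_0,\compl{D_0}{L}\}\in\mathbb{D}$, $\minint{D_0}{L}$ is the single shared node, and $L\not\to G$. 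Cutting the \textsc{bc}-diagram along $\minint{D_0}{L}$ then yields the commuting diagram of Definition~\ref{def:admissibility} with rule addition $\compl{D_0}{L}$.

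For uniqueness, let $\rho$ be \emph{any} rule admissible for the transition, with rule addition $D$, left-hand side $L$, and partial match $P=\compl{D}{L}$. As above, $D$ is a single hyperedge $f$ with its vertices, $P$ a single hyperedge $e$ with its vertices, and $\minint{D}{L}$ the single node $v_0$ at which $e$ and $f$ meet in $L$; write $\beta,j$ for the label of $f$ and the slot of $v_0$ in it, and $\alpha,i$ for those of $e$, so that $L$ is the activated pair labelled $\pair{\alpha_i}{\beta_j}$. The crux --- which I expect to be the main obstacle --- is to show that the triple $(\beta,j,v_0)$ is determined by the inclusion $J\to F$ alone, independently of $\rho$. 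For the reaction actually to take place, the contact node $v_0$ must be shared with the borrowed hyperedge, so its image lies in the interface $J$; and since $e$ and $f$ share only $v_0$ in $L$, the remaining incident vertices of $f$ lie outside $P$ and are therefore fresh in the minimal borrowed context. Hence $F\setminus J$ is a single hyperedge, its label is $\beta$, its unique incident vertex lying in $J$ is $v_0$, and the slot of that vertex is $j$ --- all read off $J\to F$.

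With $(\beta,j)$ fixed, the unique-partners condition of Definition~\ref{def:interSys}, applied via the symmetry $\pair{\alpha_i}{\beta_j}=\pair{\beta_j}{\alpha_i}$, forces the partner $(\alpha,i)$ and hence determines the activated pair $L$, and with it the rule $\rho$. (The partial match edge $e$ inside $G$ itself need not be unique when $v_0$ has large degree, but distinct choices merely give distinct \textsc{bc}-diagrams for the same rule, so uniqueness of the admissible rule is unaffected.) The one genuinely delicate point is thus the claim of the previous paragraph that the overlap of the partial match with the interface is forced to be exactly the minimal interface $\minint{D}{L}$, rather than something larger or misplaced; this is where the minimality built into the borrowed-context label $J\to F$ is essential, and it is the rigorous form of the slogan that ``since we also know where the borrowed edge connects to the graph, it is possible to find the partner involved in the transition''.
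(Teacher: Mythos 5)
Your strategy --- read the borrowed hyperedge and its attachment port off the label, then let the unique-partners condition pin down the rule --- is the right one, and you correctly isolate the crux. But the crux is a genuine gap, not a point of polish. The inference ``the remaining incident vertices of $f$ lie outside $P$ and are therefore fresh in the minimal borrowed context'' does not follow from the definitions: the partial match $D_0=G\cap L$ of the underlying \textsc{bc}-diagram is an arbitrary overlap, not necessarily an element of $\mathbb D$, so it may identify further, non-interaction ports of the borrowed edge with nodes of $G$ (which are then forced into $J$, hence not fresh), it may contain no hyperedge at all (so that $F\setminus J$ has two hyperedges), or it may even miss the contact node of $L$ entirely. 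Each of these breaks a step you rely on. Concretely: take $\ar(\alpha)=2$ and rules with left-hand sides $\pair{\alpha_1}{\beta_1}$ and $\pair{\alpha_2}{\gamma_1}$ (unique partners holds), and let $G$ consist of a $\beta$-edge at $u_1$ and a $\gamma$-edge at $u_2$ with $J=\{u_1,u_2\}$. The partial match sending the $\beta$-edge of $L$ to $G$'s $\beta$-edge \emph{and} the second port of the $\alpha$-edge to $u_2$ yields a non-$\tau$ label whose borrowed $\alpha$-edge sits on both interface nodes, and then both rules satisfy every clause of Definition~\ref{def:admissibility}, so uniqueness fails; dually, a partial match consisting of a single non-interaction vertex leaves the contact node fresh, and then no rule is admissible, so existence fails.

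What rescues the statement is the restriction, hinted at after Definition~\ref{def:activepair} (``the set of acceptable partial matches''), to transitions generated from partial matches $D_0\in\mathbb D$, i.e.\ exactly one hyperedge of the activated pair together with its incident vertices and nothing more. Under that hypothesis your argument does go through: $F\setminus J$ is a single hyperedge whose only vertex in $J$ is the image of $\minint{D}{L}$, so $(\beta,j)$ is read off the label, unique partners determines $(\alpha,i)$ and hence $L$, and your closing remark that different choices of the matched partner edge inside $G$ yield the same rule disposes of the residual ambiguity. You must either add this normalisation of the \textsc{bc}-diagram as an explicit hypothesis or prove that every non-$\tau$ transition of $\mathbb I$ can be so normalised; as written, your proof asserts exactly the property it identifies as the delicate point.
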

\end{example}


Finally, we conclude our investigation with the following  positive  result. 
\begin{theorem}[Compositionality]
  Let $(\Lambda,\R)$ be a Lafont interaction graph system, 
  or an interaction graph system with unique partners.
  Let $\mathbb D$ be its set of active pairs.
  
  Let $t_1 = \theaction$ and $t_2 = \action{G'}{J'}{F'}{K'}{H'}$ be two non-$\tau$ transitions and $D$ and $D'$  their respective rule additions.

  If $\{ D,D' \} \equiv L \in \mathbb D$,  let $\bar G$ and $\bar H$ are described by  the following  diagrams
\begin{displaymath}
\begin{tikzpicture}[scale=.8,baseline={(current bounding box.west)},semithick]
  
  \mnode[star]{\minint{D}{L}}{0,1}
  \mnode[j]{J}{1,0}
  \mnode[jj]{J'}{1,2}
  \mnode[g]{G}{3,0}
  \mnode[gg]{G'}{3,2}
  \mnode[barg]{\bar G}{4,1}
  \mnode[barj]{\bar J}{2,1}
  \POC{barg}{180}
  \POC{barj}{180}
  \foreach \u/\v in {star/j,star/jj,j/barj,jj/barj,j/g,jj/gg,star/g,star/gg,g/barg,gg/barg} 
  {\ardrawd{\u}{\v}};
  \mnode[r]{R}{6,1}
  \mnode[h]{H}{7,0}
  \mnode[hh]{H'}{7,2}
  \mnode[barh]{\bar H}{8,1}
  \POC{barh}{180}
  \foreach \u/\v in {r/h,r/hh,h/barh,hh/barh} 
  {\ardrawd{\u}{\v}};
\end{tikzpicture}
\end{displaymath}
where $\minint{D}{L} \to J$ and $\minint{D}{L} \to J'$ are the inclusions from the admissibility  of $\rho$ for states $J \to G$ and $J' \to G'$ (Definition~\ref{def:admissibility}).

  Then \[\action{\bar G}{\bar J}{\bar J}{\bar J}{\bar H}.\]
\end{theorem}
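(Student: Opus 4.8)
The plan is to obtain the $\tau$-transition from Proposition~\ref{prop:compositionality} once the ambiguity of the rule has been removed by the uniqueness lemmas, and then to pin down the target state $\bar H$ by pasting pushouts. First I would note that, $t_1$ and $t_2$ being non-$\tau$, Lemma~\ref{lem:therule1} in the Lafont case and Lemma~\ref{lem:therule2} in the unique-partners case provide a \emph{unique} admissible rule for each of them; since the rule additions $D$ and $D'$ are non-trivial subgraphs of $L$ and $\{D,\compl{D}{L}\}\equiv L\in\mathbb D$, in both kinds of system the left-hand side $L$ already determines this rule, so $t_1$ and $t_2$ are obtained from \textsc{dpobc}-diagrams $\mathtt D_1,\mathtt D_2$ that use one and the same rule $\rho=\redrule{L}{I}{R}$. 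As moreover $\{D,D'\}\in\mathbb D$, Proposition~\ref{prop:compositionality} applies: composing $\mathtt D_1$ with $\mathtt D_2$ along $J\gets\minint{D}{L}\to J'$ --- possible by the composition \textsc{Fact} recalled from~\cite{BEK06} --- yields a \textsc{dpobc}-diagram whose source state is the inclusion $\bar J\to\bar G$ with $\bar G$ and $\bar J$ the pushouts displayed in the statement, and whose transition is \emph{silent}, because the combined partial match already is the whole of $L$ so that nothing is borrowed. This gives a transition $\action{\bar G}{\bar J}{\bar J}{\bar J}{\bar H_0}$ for some $\bar H_0$, and only $\bar H_0\cong\bar H$ remains to be shown.

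For the last point I would unfold the composed diagram. Inside $\bar G=G+_{\minint{D}{L}}G'$ the left-hand side is matched by the copy $L=\compl{D}{L}+_{\minint{D}{L}}D$, whose two halves are the partial match $\compl{D}{L}\hookrightarrow G$ of $t_1$ and the partial match $D\hookrightarrow G'$ of $t_2$ (the complements of the respective rule additions). Since the step is silent, the borrowed-completion of $\bar G$ is $\bar G$ itself; the \textsc{dpo} pushout complement of $I\to L\hookrightarrow\bar G$ is the graph $\bar C$ with $\bar G\cong L+_I\bar C$; the target is $\bar H_0\cong\bar C+_I R$; and the new interface is the pullback of $\bar J\to\bar G\gets\bar C$, which is $\bar J$, interfaces consisting of nodes and nodes never being deleted. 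Writing $C,C'$ for the pushout complements occurring in $\mathtt D_1,\mathtt D_2$, so that $H\cong C+_I R$ and $H'\cong C'+_I R$, the heart of the matter is the isomorphism $\bar C\cong C+_I C'$. This follows by pasting pushouts in the adhesive category of hyper-graphs: with $G_c\cong G+_{\compl{D}{L}}L\cong L+_I C$ for the borrowed-completion $G_c$ of $\mathtt D_1$, and the analogous equations for $\mathtt D_2$, one computes $L+_I(C+_I C')\cong (L+_I C)+_I C'\cong G+_{\compl{D}{L}}(L+_I C')\cong G+_{\compl{D}{L}}(L+_D G')\cong G+_{\minint{D}{L}}G'=\bar G$, the last isomorphisms relying on the facts that the only overlaps are $G\cap L=\compl{D}{L}$, $L\cap G'=D$, $G\cap G'=\minint{D}{L}$, and that deleting the edges of $L\setminus I$ affects only the $G$- and $G'$-parts and leaves the gluing node $\minint{D}{L}$ intact --- this is exactly where the clause ``nodes are never deleted'' of Definition~\ref{def:interSys} is used. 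As $I\to C+_I C'$ is monic, uniqueness of pushout complements (Fact~\ref{fact:poc}) forces $\bar C\cong C+_I C'$. Finally $\bar H_0\cong\bar C+_I R\cong(C+_I C')+_I R$ is the colimit of $C\gets I\to R$ together with $I\to C'$, which is also the pushout of $H\cong C+_I R\hookleftarrow R\hookrightarrow C'+_I R\cong H'$ along the canonical inclusions $R\hookrightarrow H,\ R\hookrightarrow H'$; hence $\bar H_0\cong H+_R H'=\bar H$, which concludes the proof of $\action{\bar G}{\bar J}{\bar J}{\bar J}{\bar H}$.

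The main obstacle is precisely the isomorphism $\bar C\cong C+_I C'$, i.e.\ the statement that the target of the composed \textsc{dpobc}-diagram is the pushout of the two individual targets. The ``process-calculus'' ingredients --- complementarity of the additions and uniqueness of the rule --- serve only to make this composition unambiguous; the genuine work is a Van-Kampen-style decomposition that uses the structural restrictions of interaction graph systems, namely activated-pair left-hand sides and, above all, that reactions never delete a node, so that the rewriting step splits cleanly across the minimal interface $\minint{D}{L}$. The rest is routine pushout bookkeeping.
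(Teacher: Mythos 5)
Your proof follows the same route as the paper's (which is only a sketch): invoke Lemma~\ref{lem:therule1} or~\ref{lem:therule2} to pin down a single rule $\rho$ for both transitions, compose the two \textsc{dpobc}-diagrams along $J\gets\minint{D}{L}\to J'$ as in~\cite{BEK06}, observe that the combined partial match is all of $L$ so the label is silent, and use the no-node-deletion clause of Definition~\ref{def:interSys} to keep the interface $\bar J$ intact. Your explicit verification that the target of the composed diagram is $\bar H\cong H+_R H'$ (via $\bar C\cong C+_I C'$ and pushout pasting) is correct and supplies detail that the paper's sketch leaves implicit.
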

\begin{proof}[Sketch of proof]
  By Lemma~\ref{lem:therule1} or~\ref{lem:therule2}, 
  there exists exactly one rule $\rho \in \R$ with $L$ as a left-hand side that allows to derive transitions $t_1$ and $t_2$ -- 
  it is indeed the same rule for both. 
  Let $\mathtt D$ be the composition diagram of the
  \textsc{dpobc}-diagrams justifying the transitions.

  It is first shown that $\bar G \equiv \bar G_c$. 
  Since the upper and lower left squares of $\mathtt D$ are pushouts
  we can infer that  $\bar D \equiv L$ and $\bar J \equiv \bar F$.
  Finally, since no vertex is deleted (see Definition~\ref{def:interSys}),
  we have  $\bar J \to \bar C$ and thus $\bar K \equiv \bar J$.

  So $\mathtt D$ is a \textsc{bc}-diagram of a $\tau$-reaction from $\bar J \to \bar G$ to $\bar J \to \bar H$.

\end{proof}


In fact, the main property that we have used is the following. 

\noindent\begin{minipage}[c]{.67\linewidth}\begin{definition}[Complementarity of Actions]
  \label{def:complementarityOfActions}
  A graph transformation systems satisfies 
  \emph{Complementarity of Actions}
  if for each transition $\theaction$
  there is a unique rule $L\gets I \to R$
  such that there exists a \textsc{dpobc}-diagram
as shown to the right.
\end{definition}
\end{minipage}
\begin{minipage}[c]{.33\linewidth}
    \begin{displaymath}
    \begin{tikzpicture}[scale=1,baseline={(current bounding box.west)},semithick]
  \mnode[D]{D}{0,3}
  \mnode[L]{L}{1,3}
  \mnode[I]{I}{2,3}
  \mnode[R]{R}{3,3}
  \mnode[G]{G}{0,2}
  \mnode[Gc]{G_c}{1,2}
  \mnode[C]{C}{2,2}
  \mnode[H]{H}{3,2}
  \mnode[J]{J}{0,1}
  \mnode[F]{F}{1,1}
  \mnode[K]{K}{2,1}

  \foreach \u/\v in {D/L,I/L,I/R,D/G,L/Gc,I/C,R/H,G/Gc,J/G,J/F,F/Gc,K/C,R/H,C/Gc,K/F,C/H,K/H}
  {\ardrawd{\u}{\v}};
  
  \POC{Gc}{45}
  \POC{Gc}{135}
  \POC{Gc}{225}
  \POC{H}{135}
  \POC{K}{135}
\end{tikzpicture}
  \end{displaymath}
\end{minipage}
\medskip

In this situation, 
we can effectively determine if two transitions are $\tau$-compatible. 
Thus we can derive a counterpart of the communication rule of \textsc{ccs}. 
Hence, 
if a graph transformation systems satisfies 
{Complementarity of Actions}
then a rule of the  following form is derivable in \textsc{sosbc}.
\begin{displaymath}
  \frac
  {\textstyle t=\fullaction{G}{\bar J}{\bar F}{\bar K}{H} \qquad t'= \fullaction{G'}{\bar J}{\bar F'}{\bar K'}{H'}}
  {\textstyle \fullaction{\bar G}{\bar J}{\bar J}{\bar J}{\bar H}}
  \quad
  \text{ $t$ and $t'$ $\tau$-compatible }
\end{displaymath}
In other words,
in a graph transformation system with Complementarity of Actions
we can apply the results of~\cite{BEK06} 
to obtain a counterpart to the communication rule. 

\section{Related and Future work}
On a very general level, 
the present work is 
meant to strengthen the conceptual similarity of graph transformation systems
and process calculi;
thus it is part of a high-level research program 
that has been the theme of a Dagstuhl Seminar in
2005~\cite{knig_et_al:DSP:2005:27}.
In this wide field,
structural operational semantics is
occasionally considered as an instance of 
the tile model (see~\cite{DBLP:conf/birthday/GadducciM00} for an
overview). 
With this interpretation, \textsc{sos} has served as motivation for work on operational semantics 
of graph transformation systems
(e.g.~\cite{DBLP:conf/icalp/CorradiniHM00}).

A new perspective on operational semantics, 
namely the ``automatic'' generation of labeled transition semantics
from reaction rules, 
has been provided by the seminal work 
of Leifer and Milner~\cite{DBLP:conf/concur/LeiferM00}
and its successors~\cite{sassonesobocinski:njc,EH06};
as an example application, we want to mention the ``canonical''  operational semantics for the
ambient calculus~\cite{Rathke2010}. 
The main point of the latter work is 
the focus on the  ``properly''
\emph{inductive} definition of structural operational semantics.
To the best of our knowledge, 
there is no recent work on the operational semantics of graph
transformation systems
that provides a general method for the inductive definition of \emph{operational
semantics}.  
This is not to be confused with the inductive definition of graphical
encodings of process calculi on (global) states. 

With this narrower perspective on techniques for the ``automatic''
generation of \textsc{lts}s, 
we want to mention that some ideas of our three layer semantics in
Section~\ref{sec:SOSsemantics} can already be found
in~\cite{bonchi2009labelled}, 
where all rules of the definition of the labelled transition semantics
have at most one premise.
This is in contrast to the work of~\cite{Rathke2010}
where the labelled transition semantics is derived from two smaller subsystems:
the process view and the context view;
the subsystems are combined to obtain the operational semantics. 
The latter work is term based 
and it manipulates complete subterms of processes 
using the lambda calculus in the meta-language. 
We conjecture that the use of this abstraction  mechanism 
is due to the term structure of processes.

Concerning future work, 
the first  extension of the theory concerns more general (hierarchical) graph-like structures 
as captured by adhesive categories~\cite{adhesivejournal}
and their generalizations (e.g.~\cite{braatz2010finitary}). 
Moreover, 
as an orthogonal development, 
we plan to consider the case 
of more general rules that are allowed to have an  arbitrary (graph) morphism on the right hand 
side;
moreover,  also states are arbitrary morphisms. 
The general rule format is important to model substitution in name passing calculi
while arbitrary graph morphisms as states yield more natural
representations of (multi-wire) interaction nets.
The main challenge is the quest for more general sufficient conditions
that allow for non-trivial compositions of labelled transitions,
which can be seen as a general counterpart of the \textsc{ccs} communication rule.



\section{Conclusion}
\label{sec:conclusion}
We have reformulated the \textsc{bc} technique 
as the \textsc{sosbc}-system in Table~\ref{tab:transitions}
to make a general analogy to
the \textsc{sos}-rules  for \textsc{ccs}. 
There is no need for a counterpart of the communication rule. 
We conjecture that this is due to the ``flat'' structure of graphs
as opposed to the tree structure of~\textsc{ccs}-terms. 

The main contribution concerns questions about the derivability 
of a counterpart of the communication rule. 
First, we give an example, which illustrates that 
the derivability of such a rule is non-trivial;
however, 
it is derivable 
if the relevant graph transformation system
satisfies \emph{Complementarity of Actions}. 
We have given two classes of examples that satisfy this requirement, 
namely hyper-graphs with unique partners and simply wired hyper-graphs. 
This is a first step towards a ``properly'' inductive definition of structural operational 
semantics for graph transformation systems. 


\paragraph{Acknowledgements}
{\small We would like to thank Barbara K{\"o}nig, Filippo Bonchi 
and Paolo Baldan for providing us with drafts 
and ideas about a more general research program on 
compositionality in graph transformation. 
We are also grateful for the constructive criticism 
and the helpful comments of the anonymous referees. }

 \bibliographystyle{eptcs}
 \bibliography{SOSbib}

\begin{thebibliography}{10}
\providecommand{\bibitemdeclare}[2]{}
\providecommand{\urlprefix}{Available at }
\providecommand{\url}[1]{\texttt{#1}}
\providecommand{\href}[2]{\texttt{#2}}
\providecommand{\urlalt}[2]{\href{#1}{#2}}
\providecommand{\doi}[1]{doi:\urlalt{http://dx.doi.org/#1}{#1}}
\providecommand{\bibinfo}[2]{#2}

\bibitemdeclare{phdthesis}{Alex99}
\bibitem{Alex99}
\bibinfo{author}{V.~Alexiev} (\bibinfo{year}{1999}):
  \emph{\bibinfo{title}{Non-deterministic interaction nets}}.
\newblock Ph.D. thesis, \bibinfo{school}{University of Alberta},
  \bibinfo{address}{Edmonton, Alta., Canada}.

\bibitemdeclare{inproceedings}{BEK06}
\bibitem{BEK06}
\bibinfo{author}{P.~Baldan}, \bibinfo{author}{H.~Ehrig} \&
  \bibinfo{author}{B.~K{\"o}nig} (\bibinfo{year}{2006}):
  \emph{\bibinfo{title}{Composition and Decomposition of {DPO} Transformations
  with Borrowed Context}}.
\newblock In: {\sl \bibinfo{booktitle}{Proc. of ICGT '06 (International
  Conference on Graph Transformation)}}, \bibinfo{publisher}{Springer}, pp.
  \bibinfo{pages}{153--167}, \doi{10.1007/11841883\textunderscore 12}.
\newblock \bibinfo{note}{LNCS 4178}.

\bibitemdeclare{article}{bonchi2009labelled}
\bibitem{bonchi2009labelled}
\bibinfo{author}{F.~Bonchi}, \bibinfo{author}{F.~Gadducci} \&
  \bibinfo{author}{G.~V. Monreale} (\bibinfo{year}{2009}):
  \emph{\bibinfo{title}{{Labelled transitions for mobile ambients (as
  synthesized via a graphical encoding)}}}.
\newblock {\sl \bibinfo{journal}{Electronic Notes in Theoretical Computer
  Science}} \bibinfo{volume}{242}(\bibinfo{number}{1}), pp.
  \bibinfo{pages}{73--98}, \doi{10.1016/j.entcs.2009.06.014}.

\bibitemdeclare{inproceedings}{braatz2010finitary}
\bibitem{braatz2010finitary}
\bibinfo{author}{B.~Braatz}, \bibinfo{author}{H.~Ehrig},
  \bibinfo{author}{G.~Karsten} \& \bibinfo{author}{U.~Golas}
  (\bibinfo{year}{2010}): \emph{\bibinfo{title}{{Finitary M-adhesive
  categories}}}.
\newblock In: {\sl \bibinfo{booktitle}{Graph Transformations: 5th International
  Conference, ICGT 2010, Twente, the Netherlands, September 27--October 2,
  2010, Proceedings}}, \bibinfo{organization}{Springer-Verlag}, pp.
  \bibinfo{pages}{234--249}, \doi{10.1007/978-3-642-15928-2\textunderscore 16}.

\bibitemdeclare{inproceedings}{DBLP:conf/icalp/CorradiniHM00}
\bibitem{DBLP:conf/icalp/CorradiniHM00}
\bibinfo{author}{Andrea Corradini}, \bibinfo{author}{Reiko Heckel} \&
  \bibinfo{author}{Ugo Montanari} (\bibinfo{year}{2000}):
  \emph{\bibinfo{title}{Graphical Operational Semantics}}.
\newblock In: {\sl \bibinfo{booktitle}{ICALP Satellite Workshops}}, pp.
  \bibinfo{pages}{411--418}.

\bibitemdeclare{article}{EhrReg06}
\bibitem{EhrReg06}
\bibinfo{author}{T.~Ehrhard} \& \bibinfo{author}{L.~Regnier}
  (\bibinfo{year}{2006}): \emph{\bibinfo{title}{Differential interaction
  nets}}.
\newblock {\sl \bibinfo{journal}{Theoretical Computer Science}}
  \bibinfo{volume}{364}(\bibinfo{number}{2}), pp. \bibinfo{pages}{166--195},
  \doi{10.1016/j.tcs.2006.08.003}.

\bibitemdeclare{article}{EH06}
\bibitem{EH06}
\bibinfo{author}{H.~Ehrig} \& \bibinfo{author}{B.~K{\"o}nig}
  (\bibinfo{year}{2006}): \emph{\bibinfo{title}{{Deriving Bisimulation
  Congruences in the {DPO} Approach to Graph Rewriting with Borrowed
  Contexts}}}.
\newblock {\sl \bibinfo{journal}{Mathematical Structures in Computer Science}}
  \bibinfo{volume}{16}(\bibinfo{number}{6}), pp. \bibinfo{pages}{1133--1163},
  \doi{10.1017/S096012950600569X}.

\bibitemdeclare{inproceedings}{DBLP:conf/birthday/GadducciM00}
\bibitem{DBLP:conf/birthday/GadducciM00}
\bibinfo{author}{F.~Gadducci} \& \bibinfo{author}{U.~Montanari}
  (\bibinfo{year}{2000}): \emph{\bibinfo{title}{The tile model}}.
\newblock In \bibinfo{editor}{Gordon~D. Plotkin}, \bibinfo{editor}{Colin
  Stirling} \& \bibinfo{editor}{Mads Tofte}, editors: {\sl
  \bibinfo{booktitle}{Proof, Language, and Interaction}},
  \bibinfo{publisher}{The MIT Press}, pp. \bibinfo{pages}{133--166}.

\bibitemdeclare{proceedings}{knig_et_al:DSP:2005:27}
\bibitem{knig_et_al:DSP:2005:27}
\bibinfo{editor}{B.~K{\"o}nig}, \bibinfo{editor}{U.~Montanari} \&
  \bibinfo{editor}{P.~Gardner}, editors (\bibinfo{year}{2005}):
  \emph{\bibinfo{title}{04241 Abstracts Collection}}. {\sl
  \bibinfo{series}{Dagstuhl Seminar Proceedings}} \bibinfo{volume}{04241},
  \bibinfo{publisher}{Internationales Begegnungs- und Forschungszentrum f{\"u}r
  Informatik (IBFI), Schloss Dagstuhl, Germany}, \bibinfo{address}{Dagstuhl,
  Germany}.
\newblock \urlprefix\url{http://drops.dagstuhl.de/opus/volltexte/2005/27}.

\bibitemdeclare{article}{adhesivejournal}
\bibitem{adhesivejournal}
\bibinfo{author}{S.~Lack} \& \bibinfo{author}{P.~Soboci{\'n}ski}
  (\bibinfo{year}{2005}): \emph{\bibinfo{title}{Adhesive and quasiadhesive
  categories}}.
\newblock {\sl \bibinfo{journal}{RAIRO - Theoretical Informatics and
  Applications}} \bibinfo{volume}{39}(\bibinfo{number}{2}), pp.
  \bibinfo{pages}{522--546}, \doi{10.1051/ita:2005028}.

\bibitemdeclare{inproceedings}{Laf90}
\bibitem{Laf90}
\bibinfo{author}{Y.~Lafont} (\bibinfo{year}{1990}):
  \emph{\bibinfo{title}{Interaction nets}}.
\newblock In: {\sl \bibinfo{booktitle}{Proceedings of the 17th ACM
  SIGPLAN-SIGACT symposium on Principles of programming languages}},
  \bibinfo{series}{POPL '90}, \bibinfo{publisher}{ACM}, \bibinfo{address}{New
  York, NY, USA}, pp. \bibinfo{pages}{95--108}, \doi{10.1145/96709.96718}.

\bibitemdeclare{inproceedings}{Laf95}
\bibitem{Laf95}
\bibinfo{author}{Y.~Lafont} (\bibinfo{year}{1995}): \emph{\bibinfo{title}{{From
  proof-nets to interaction nets}}}.
\newblock In: {\sl \bibinfo{booktitle}{Proceedings of the workshop on Advances
  in linear logic}}, \bibinfo{publisher}{Cambridge University Press},
  \bibinfo{address}{New York, NY, USA}, pp. \bibinfo{pages}{225--247},
  \doi{10.1017/CBO9780511629150.012}.

\bibitemdeclare{inproceedings}{DBLP:conf/concur/LeiferM00}
\bibitem{DBLP:conf/concur/LeiferM00}
\bibinfo{author}{J.~J. Leifer} \& \bibinfo{author}{R.~Milner}
  (\bibinfo{year}{2000}): \emph{\bibinfo{title}{Deriving Bisimulation
  Congruences for Reactive Systems}}.
\newblock In \bibinfo{editor}{Catuscia Palamidessi}, editor: {\sl
  \bibinfo{booktitle}{CONCUR}}, {\sl \bibinfo{series}{Lecture Notes in Computer
  Science}} \bibinfo{volume}{1877}, \bibinfo{publisher}{Springer}, pp.
  \bibinfo{pages}{243--258}, \doi{10.1007/3-540-44618-4\textunderscore 19}.

\bibitemdeclare{phdthesis}{MazzaPhd06}
\bibitem{MazzaPhd06}
\bibinfo{author}{D.~Mazza} (\bibinfo{year}{2006}):
  \emph{\bibinfo{title}{Interaction Nets: Semantics and Concurrent
  Extensions}}.
\newblock Ph.D. thesis, \bibinfo{school}{Universit{\'e} de la
  M{\'e}diterran{\'e}e \& Roma Tre}.

\bibitemdeclare{article}{Rathke2010}
\bibitem{Rathke2010}
\bibinfo{author}{J.~Rathke} \& \bibinfo{author}{P.~Soboci{\'n}ski}
  (\bibinfo{year}{2010}): \emph{\bibinfo{title}{Deriving structural labelled \
  transitions for mobile ambients}}.
\newblock {\sl \bibinfo{journal}{Information and Computation}}
  \bibinfo{volume}{208}, pp. \bibinfo{pages}{1221--1242},
  \doi{10.1016/j.ic.2010.06.001}.

\bibitemdeclare{article}{sassonesobocinski:njc}
\bibitem{sassonesobocinski:njc}
\bibinfo{author}{V.~Sassone} \& \bibinfo{author}{P.~Soboci{\'n}ski}
  (\bibinfo{year}{2003}): \emph{\bibinfo{title}{Deriving Bisimulation
  Congruences Using 2-categories}}.
\newblock {\sl \bibinfo{journal}{Nordic Journal of Computing}}
  \bibinfo{volume}{10}(\bibinfo{number}{2}), pp. \bibinfo{pages}{163--183}.

\bibitemdeclare{article}{Simpson2004287}
\bibitem{Simpson2004287}
\bibinfo{author}{A.~Simpson} (\bibinfo{year}{2004}):
  \emph{\bibinfo{title}{Sequent calculi for process verification:
  Hennessy-Milner logic for an arbitrary GSOS}}.
\newblock {\sl \bibinfo{journal}{Journal of Logic and Algebraic Programming}}
  \bibinfo{volume}{60--61}, pp. \bibinfo{pages}{287--322},
  \doi{10.1016/j.jlap.2004.03.004}.

\end{thebibliography}
\end{document}